\documentclass{article}
\usepackage{amsmath,amssymb,amsfonts,amsthm}
\usepackage{color,hyperref}
\usepackage{graphicx}
\usepackage{tikz}

\newcommand{\subscr}[2]{#1_{\textup{#2}}}

\newcommand{\setdef}[2]{\{#1 \, : \; #2\}}
\newcommand{\map}[3]{#1: #2 \rightarrow #3}

\newcommand{\dist}{{\rm dist}}

\newcommand{\realnonnegative}{\mathbb{R}_{\geq0}}

\newcommand{\card}[1]{|#1|} 

\addtolength{\textwidth}{2cm}
\addtolength{\hoffset}{-1cm}
\addtolength{\textheight}{2cm}
\addtolength{\voffset}{-1cm}

\providecommand{\Pr}{\mathbb{P}}

\newtheorem{definition}{Definition}
\newtheorem{proposition}{Proposition}
\newtheorem{theorem}{Theorem}
\newtheorem{example}{Example}
\newtheorem{remark}{Remark}
\newtheorem{corollary}{Corollary}

\newcommand{\G}{\mathcal{G}}


\title{Protecting shared information in networks: a network security game with strategic attacks}

\author{
Bram de Witte\thanks{B. De Witte, P. Frasca and J.\ Timmer are with Department of Applied Mathematics, University of Twente, 7500 AE Enschede, The Netherlands.
{\tt\small j.b.timmer@utwente.nl}.}
\and 
Paolo Frasca$^{\ast}$\thanks{P.~Frasca is with Univ.\ Grenoble Alpes, CNRS, Inria, Grenoble INP, GIPSA-lab, F-38000 Grenoble, France. {\tt\small paolo.frasca@gipsa-lab.fr}.}
\and Bastiaan Overvest\thanks{B.~Overvest is with CPB Netherlands Bureau for Economic Policy Analysis, The Hague, The Netherlands. {\tt\small b.overvest@cpb.nl}.} 
%
\and 
Judith Timmer$^{\ast}$%
}



\begin{document}
\maketitle
\begin{abstract}
A digital security breach, by which confidential information is leaked, does not only affect the agent whose system is infiltrated, but is also detrimental to other agents socially connected to the infiltrated system. Although it has been argued that these externalities create incentives to under-invest in security, this presumption is challenged by the possibility of strategic adversaries that attack the least protected agents. In this paper we study a new model of security games in which agents share tokens of sensitive information in a network of contacts. The agents have the opportunity to invest in security to protect against an attack that can be either strategically or randomly targeted. We show that, in the presence of random attack, under-investments always prevail at the Nash equilibrium in comparison with the social optimum. Instead, when the attack is strategic, either under-investments or over-investments are possible, depending on the network topology and on the characteristics of the process of the spreading of information. Actually, agents invest more in security than socially optimal when dependencies among agents are low (which can happen because the information network is sparsely connected or because the probability that information tokens are shared is small). These over-investments pass on to under-investments when information sharing is more likely (and therefore, when the risk brought by the attack is higher).
\end{abstract}


\section{Introduction}

Our society and economy have become largely dependent on sharing information over networks. 
{ Although in general computer networks provide benefits, they are also prone to cyber attacks, whose impact increases with our dependence on them. Security breaches can have various origins, such as the spread of malware, compromissions of social network accounts, or exploitations of system vulnerabilities. In this paper, we interested in cyber attacks where, without permission, confidential information is obtained. This information may represent  for instance confidential documents, intellectual property or identity information. The impact of having sensitive information stolen can be destructive: bank accounts can be plundered, companies can be threatened that strategic decisions or sensitive information will be released or identities can be stolen for criminal purposes. These forms of cyber attacks where confidential information is obtained are occurring more often and keeping personal information out of the hands of thieves is becoming increasingly difficult~\cite{Jang-Jaccard}.

From both the scientific literature and the general media~\cite{Varian00}, it is apparent that the variety of potential threats is huge. Depending on their purpose, some attacks aim at compromising a whole class of systems or devices, whereas others aim at precise targets. We shall refer to the former type of attacks as {\em random} attacks and to the latter type as {\em strategic} attacks~\cite{Acemoglu2016,Lou:2015}.
%
An example of a random cyber attack would be WannaCry. This is a ransomware virus that in 2017 infected about 200,000 computers worldwide, including computers of the National Health Service in the UK, Renault in France and Telefonica in Spain: these computers were all vulnerable because their operators failed to install in time a simple software patch - i.e. arguably under-invested in security measures. Examples of strategic cyber attacks are quite common. A well-known attack is the 2016 security breach against the Democratic National Committee, by which thousands of e-mails were stolen and subsequently leaked, including e-mails from Hillary Clinton. }

{ 
Researchers have soon recognized that network security is not only a matter of devising suitable security measures, but also of making sure that individuals put them into practice~\cite{Varian00}. Consequently, the adoption of security measures has been regarded as an economic problem and has been addressed with the tools of game theory. In this perspective, the key observation is that the presence of a network introduces {\em interdependencies between risks and costs incurred by the individuals}~\cite{Heal03}. Hence, the interesting question becomes understanding the effects of these interdependencies. In order to answer this question, a large literature has been developed not only in the economic science but also in computer science~\cite{Lou:2015} and in engineering, including security problems for wireless communication~\cite{LiKoPo07} and interdependent control systems by~\cite{Amin2013,Zhu15,Yuan16,Gupta17}. These works have addressed an array of questions that are relevant in our own work, including  security games featuring strategic attacks~\cite{CaCeOr12}, multiple targets~\cite{Lou:2015}, multiple attackers and defenders~\cite{ZhTeBa10}.
In this Introduction, we will not trying to provide a complete literature survey on interdependent security, for which we can point the reader to sources like~\cite{Anderson,Manshaei2013,Laszka2014,HeDai18}. Instead, we will more modestly highlight a few recurring issues that motivate our work on interdependencies in network security.}


A number of papers \cite{Anderson,Lelarge09} have argued that security investments are not as high as they should be due to \textit{externalities} in the network. These externalities originate because confidential information can be leaked through other channels than one's own device. As a consequence, agents face risks whose magnitude depend not only on their own security levels but also on the security levels of others. In this setting, investments act like \textit{strategic complements} as benefits of security adoption are not exclusively for the one that invested in the security. Consequently, a negligent agent who does not adequately protect his and others' information due to free-riding, may cause considerable damage to other agents in the network. This leads to situations where benefits of security adoption might fall significantly below the cost of adoption, which causes under-investments. 

More recently, the prediction of under-investments in information networks has been challenged. Acemoglu \textit{et al.\ }\cite{Acemoglu2016} and Bachrach \textit{et al.\ }\cite{Bachrach13} show that investments in security might as well be \textit{strategic substitutes} when agents face an intelligent threat. In their setting, an attacker can aim at the weakest nodes: in this case, a negligent agent who does not invest in security has a relative higher chance that his information is stolen by a direct attack of the hacker. This eliminates the ability to free-ride on security investments of others and forces an agent to invest. In fact, this framework leads to incentives which correspond to an arms race: agents compete with each other leading to over-investments in security. Bachrach \textit{et al.\ }even propose that an optimal policy requires taxing security, contrarily to subsidizing security as recommended by models that do not include an intelligent adversary. 

{ 
Our work provides a tractable model of network security game that can explain both under-investments and over-investments, depending on the strategy of attack and on the amount of shared information,  which eventually depends on the topology of the network connecting the agents. Our original framework and our results can be informally described as follows. 
Inspired by attacks that aim at recovering sensitive information, such as the DNC hack, we define a dissemination model where interconnected agents share confidential information (e-mails€™) with each other with a certain probability $p$, resulting in a dissemination of information among peers that depends on the network structure.} Agents store information (both their own and that received from others) and invest in security to protect it. A malignant and possibly intelligent attacker, who has the goal to obtain as much information as possible, attacks one of the agents. If the attack is successful, the attacker acquires all the information that was stored by the agent, thus making this agent and possibly also other agents, which have entrusted their information to the attacked agent, victims of the attack. If the attacker is able to optimally choose which agent to attack, the attack will be said to be \textit{strategic}: otherwise, to be \textit{random}. 
In our model, the security investments are the outcome of the resulting two-stage game between the agents and the attacker, where the attacker knows the investments of the agents, who in turn choose their investments anticipating the strategy of the attacker. Under this game structure, we show that when the attack is random, then the equilibrium investments are lower than the socially optimal investments. Instead, if the attack is strategic, then the relation between optimal and equilibrium investments depends on the amount of information shared: when the fraction of shared information is low, equilibrium investments are higher than optimal ones, whereas the opposite happens when the fraction of shared information is high.
 { The fraction of shared information can be high for two distinct reasons: either the diffusion probability $p$ is high, or the network is tightly connected. The latter case shows our results to be consistent with those in \cite{CaCeOr12}, where the authors adapt interdependent security games to model strategic attacks and find that€™ over-investments prevail in nondense networks. 

In order to keep our analysis tractable, some of our results on strategic attacks make an assumption of homogeneity in the network, namely that the network is vertex-transitive. We complement these results with an analysis of the security game on star graphs, which we choose as a natural example of non-homogeneous topology: this case study shows that the essential lines of our findings, as we described them above, remain valid on general graphs.}

This paper is structured as follows. 
Section~\ref{sec:model} describes the problem that we want to address, introducing the dissemination model, the attack and the security investments.
Subsequently, Section~\ref{sec:spreading} examines the dissemination model that underlies the security game. 
Sections~\ref{sec:random} and~\ref{sec:security} are the core of our paper, as they study the security game when the attack is random and when the attack is strategic, respectively. Finally, Section~\ref{sec:discussion} discusses { the obtained results and complements them with numerical evaluations on complete, ring, and star graphs that we have selected as fundamental examples. Section~\ref{sec:outro} summarizes and concludes the paper. The body of the paper is complemented by an Appendix which is devoted to detailed derivations (for the examples of complete and star graphs) and proofs (of the main results about strategic attacks).}


\section{Information dissemination and network game}\label{sec:model}
Our dynamics of interest take place on a network of agents that can share tokens of information, such as confidential documents, with each other. Let us think of $n$ agents in a set $V=\{1,\dots,n\}$. We say that two agents $i$ and $j$ are linked by an edge $(i,j)$ when $i$ and $j$ can share documents directly with each other. These edges create a (undirected) network $\mathcal{G}= \langle V, A \rangle$, where $A: V \times V \rightarrow \{0,1\}$ is the adjacency matrix in which  $A(i,j) = A(j,i) = 1$ if and only if $i$ and $j$ are linked by an edge. We denote the set of all edges in $\mathcal{G}$ as $E(\mathcal{G})$. In this graph theoretical context, we need to recall some standard definitions.
A path $u$ in $\mathcal{G}$ between agent $i$ and $j$ is a sequence of distinct edges $u=\{(i,\kappa_1),(\kappa_1,\kappa_2), \dots, (\kappa_{\ell-1},\kappa_\ell),(\kappa_\ell,j)\}$, where $|u|=\ell$ is the length of the path. We assume that $\mathcal{G}$ is a network in which there exists a path between all pairs of agents, in other words, $\mathcal{G}$ is a connected network. A subnetwork $\mathcal{G}'=(V',A')$ of $\mathcal{G}$ is a network such that $V'\subset V$ and $E(\G')\subset E(\G)$. In Figure~\ref{fig:Interesting} we illustrate these concepts and show some networks of interest.
\begin{figure}
\centering
\includegraphics[width=1\linewidth]{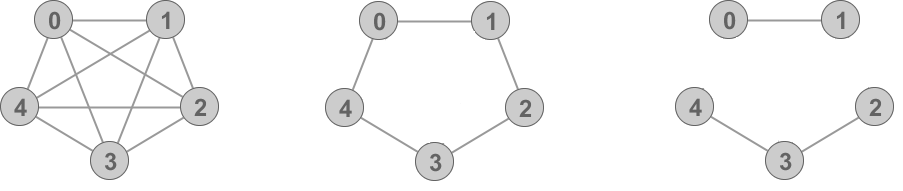}
\caption{The leftmost network is a complete network and the middle one is a ring network. In the ring, $\{(0,1)(1,2)(2,3)\}$ is a possible path from agent 0 to agent~3. As each edge in the ring is also in the complete network, the ring is a subnetwork of the complete network. While the rightmost network is not connected, it is a subnetwork of the ring and of the complete network.} 
\label{fig:Interesting}
\end{figure}

Our problem statement requires us to specify three key ingredients: (i) the dissemination of information, (ii) the adversary attack, (iii) the defensive investments.

\paragraph{Information dissemination model.}

We assume that initially every agent owns a unique document which we will denote as $d_i$ for agent $i$. All the $n$ documents spread, independently of each other, over the network $\mathcal{G}$. Although the documents are confidential, it is not detrimental for an agent when his document is obtained by other agents. We assume that an agent obtains a document from another agent with probability $p$ when they are connected. This leads to a so-called \textit{transmission network} for each document. A generic transmission network $\mathcal{T}$ is a random subnetwork of $\mathcal{G}$ and formally defined as $\langle V, \tilde{A} \rangle$, where
\begin{align*}
\tilde{A}(i,j) = \tilde{A}(j,i) = X_{ij}~ A(i,j)
\end{align*}
where $X_{ij}$ are independent random variables identically distributed according to a Bernoulli distribution with parameter $p$. We are thus assuming that the probability of transmission between two neighboring nodes is identical for every document.  
Let $\mathcal{T}_\ell$ and $x_{ij,\ell}$ be instances of transmission networks and transmission probabilities for a dissemination starting from any $\ell\in V$. Then $\mathcal{T}_\ell=\langle V, \tilde{A}_\ell \rangle$ with $\tilde{A}_\ell(i,j) = \tilde{A}_\ell(j,i) = X_{ij,\ell}~ A(i,j)$. Now, an agent obtains document $d_\ell$ when she is connected to agent $\ell$ in the {transmission network} $\mathcal{T}_\ell$.
The spread of the $n$ documents then is described by the  $n$ transmission networks.

The network structure determines the probability that a document spreads from its owner to another agent. We define the matrix $P$ with elements $P_{ij}$ representing the probability that agent $j$ owns document $d_i$ after dissemination
\begin{align} \label{eq:P}
P_{ij} &=\Pr\{\text{there exists a path between $i$ and $j$ in $\mathcal{T}_i$}\} \\
\label{eq:P-paths}
& = \Pr \{ \bigcup_{u\in U_{i,j}(\mathcal{G}) } \{ u \in U_{i,j}(\mathcal{T}_i)\}  \},
\end{align}
where $U_{i,j}(\G)$ is the set of all paths between agent $i$ and $j$ in network $\mathcal{G}$. Note that the  matrix $P$ is symmetric and only depends on $\mathcal{G}$ and on $p$.
Since we assumed that $\mathcal G$ is connected, $P$ contains only strictly positive elements. Although $P_{ij}=P_{ji}$, the event that $j$ obtains $d_i$ is independent of $i$ obtaining $d_j$, because they are respectively taking place on the transmission networks $\mathcal{T}_i$ and $\mathcal{T}_j$.
 Denote the expected number of documents obtained by agent $i$ as $D_i $ and note that
\begin{equation}\label{eq:D}
D_i  =\sum_{j \in V} P_{ji} = \sum_{j \neq i} P_{ji} + 1.
\end{equation}
We additionally denote $\mathbf{D} = \{D_1,\dots,D_n\}$. 

\paragraph{Attack model.}
After the documents have spread through the network, the adversary attacks {\em one} agent. We model this attack by a random variable from a distribution over the agents. This distribution is conveniently represented by the probability vector 
$\mathbf{a} = \{a_1,\dots,a_n\}$ which we call \textit{the attack vector}. When an attack on an agent is successful the attacker will steal all the documents stored at the target. This always includes an agent's own document, but may additionally include documents of other agents. We assume that the attack vector is established before the documents spread through the network. 

\paragraph{Defense model.}
Before the attack vector is chosen, agents have the opportunity to precautionary invest in security. We denote these investments $\mathbf{q}=\{q_1,\dots,q_n\}$ as \textit{the security vector}. These security investments are such that an attack on agent $i$ is successful with probability $1-q_i$. Let $x_i=1$ denote the event that {\em the attacker obtains document $d_i$}, and $x_i=0$ otherwise. Consequently, by conditioning and exploiting independence we establish that
\begin{align}\label{eq:probdamage}
\Pr\{x_i = 1 \} \hspace{1mm}
& = \sum_{j \in V } \hspace{1mm} a_j (1-q_j) P_{ij}.
\end{align}
Recognize that the security of an agent $i$, that is, the privacy of his information $d_i$, does not only depend on his own investment, but also depends on the investments by the other agents.
Furthermore, let $|\mathbf{x}|=\sum_{i\in V} x_i$. Observe that the expected number of stolen documents is 
\begin{align}\label{eq:totaldamage}
	\mathbb{E}(|\mathbf{x}|) = &\sum_{i \in V}\Pr\{x_i = 1\} \nonumber\\
	= &\sum_{i \in V} \sum_{j \in V} a_j (1-q_j) P_{ij} \nonumber\\
	=& \sum_{j \in V} a_j (1-q_j) D_j,
\end{align}
because the attacker affects only one node directly. 
%

\paragraph{Problem summary.}
The timing in our problem is as follows. Firstly, the agents invest in security by selecting the security vector $\mathbf{q}$. Secondly, the attacker chooses the attack vector $\mathbf{a}$, possibly in order to maximize his reward. Hereafter, the documents spread through the network. Finally, one agent is attacked by the attacker. 
Since in our model the attacker observes the security levels of all the agents, the relevant equilibrium concept is that of the Stackelberg equilibrium of the resulting two-stage game \cite{peters2016game}: the agents first select their security levels anticipating the decision of the attacker (as they know his strategy) and the attacker optimizes his { attack strategy while having knowledge of the security choices. 
Let us note that the proposed sequence of players actions in the Stackelberg game (first defender, then attacker) is the interesting one to study. On the contrary, the reverse sequence would be unrealistic and would trivialize the game. Should the attacker go first, then the agents would just know who is to be attacked and would simply be able to optimally protect the target node. 
}

\paragraph{More on the relation with literature on contagion and security games.} { 
The two-stage scheme of the strategic security game that we study in this paper is adopted from the work \cite{Acemoglu2016} on cascading failures and contagion. However, our problem statement is different because the underlying diffusion/contagion model is different. 
In~\cite{Acemoglu2016}, each agents is susceptible to the attack with probability $1-q_i$ and the infection spreads from the attacked node to all nodes that are connected to it in the sub-network spanned by the susceptible nodes. Therefore, an investment in security prevents both contagion from a direct attack and contagion through the network. Instead, in our model the susceptibility is only realized at the attacked node, whereas the dissemination of information independently takes place across all edges for all pieces of information. Therefore, nodes cannot be safe from damage even if they invest maximally in security, since their private information is shared with other nodes. Another difference is the explicit presence of the variable $p$, the probability of diffusion: in our results the amount of over- or under-investments is dependent on the level of interdependence in the network, which is directly influenced by the network topology and by the probability $p$. }

\section{Information dissemination}\label{sec:spreading}

The next proposition provides more insight about the value of $D_i$, the expected number of documents obtained by agent $i$. Its proof is straightforward and therefore omitted. In order to emphasize the dependence of $D_i$ on $p$ and $\mathcal{G},$ we shall use the notation $D_i(p,\G)$. The result is illustrated in Figure~\ref{DRingComplete}.

\begin{proposition}[Monotonicities]\label{prop:Dbounds}
Given a network $\mathcal{G}$,  $D_i(p,\G)$ is strictly increasing in $p$ for all $i$.  Given two networks $\mathcal{H}\subset \mathcal{G}$,  $D_i(p,\mathcal{H})\le D_i(p,\G)$, provided node $i$ belongs to both networks.
\end{proposition}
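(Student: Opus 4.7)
The proof proceeds by a \emph{coupling} argument. For each edge $e=(i,j)\in E(\G)$ and each document index $\ell\in V$, introduce an independent uniform random variable $U_{e,\ell}\sim\mathrm{Unif}[0,1]$, and declare edge $e$ to be open in $\mathcal{T}_\ell$ at parameter $p$ precisely when $U_{e,\ell}\le p$. This realizes all the random transmission networks (for every value of $p$ and every subnetwork of $\G$) on a common probability space in a monotone fashion.

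Weak monotonicity in $p$ is then immediate. If $p_1<p_2$, every edge open at level $p_1$ remains open at level $p_2$, so the event ``there is a path in $\mathcal{T}_\ell$ from $\ell$ to $i$'' is non-decreasing in $p$. This yields $P_{\ell i}(p_1,\G)\le P_{\ell i}(p_2,\G)$, and summing over $\ell\in V$ via \eqref{eq:D} gives $D_i(p_1,\G)\le D_i(p_2,\G)$.

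For strict monotonicity, the plan is to exhibit an explicit witness event of positive probability. Since $\G$ is connected and $n\ge 2$, pick some $\ell\ne i$ and fix any path $\pi$ in $\G$ from $\ell$ to $i$, of length $k\ge 1$. For $0<p_1<p_2<1$, let $E_\pi$ be the event that $U_{e,\ell}\in(p_1,p_2]$ for every edge $e$ of $\pi$ and $U_{e,\ell}>p_2$ for every other edge in $E(\G)$. On $E_\pi$, the transmission network $\mathcal{T}_\ell$ at level $p_2$ contains exactly the edges of $\pi$, so $\ell$ and $i$ are connected; at level $p_1$, no edge of $\mathcal{T}_\ell$ is open, so they are not. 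Since $\Pr(E_\pi)=(p_2-p_1)^k(1-p_2)^{|E(\G)|-k}>0$, this delivers $P_{\ell i}(p_1,\G)<P_{\ell i}(p_2,\G)$, and hence $D_i(p_1,\G)<D_i(p_2,\G)$. The degenerate boundary cases ($p_1=0$, using $P_{\ell i}(0,\G)=0$ for $\ell\neq i$; or $p_2=1$, using $P_{\ell i}(1,\G)=1$) follow immediately.

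For the subnetwork inequality, I would reuse the same coupling, restricted to edges of $\mathcal{H}$. Every path from $\ell$ to $i$ available in the transmission network of $\mathcal{H}$ is also available in that of $\G$, so $P_{\ell i}(p,\mathcal{H})\le P_{\ell i}(p,\G)$ for each $\ell\in V(\mathcal{H})$. Summing over $\ell\in V(\mathcal{H})$ and observing that the sum defining $D_i(p,\G)$ carries the extra non-negative contributions from $\ell\in V(\G)\setminus V(\mathcal{H})$ yields $D_i(p,\mathcal{H})\le D_i(p,\G)$. The only delicate point in the whole argument is the strict inequality in the first claim: weak monotonicity and the subnetwork comparison are essentially tautologies once the coupling is in place, so the explicit construction of $E_\pi$ is where the actual content resides.
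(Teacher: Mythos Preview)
Your coupling argument is correct and complete. The paper itself omits the proof entirely, calling it ``straightforward''; your argument supplies the details the paper left out, and the explicit witness event $E_\pi$ is precisely the right way to upgrade the obvious weak monotonicity to the strict inequality claimed.
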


\begin{example}[Star graph]\label{exa:star}
Consider a star graph with $n$ nodes: node $1$ is the center and the remaining $n-1$ nodes are the leaves. Note that (with $i,j>1$ and $i\neq j$) $$P_{1i}=p\qquad P_{i1}=p \qquad P_{ij}=p^2.$$
Hence,
$$D_1=(n-1)p+1 \qquad D_i=(n-2)p^2+ 1+p,$$
which implies that $D_1>D_i$.
\end{example}

\bigskip
In order to make our analysis tractable, we will often assume the networks to be vertex-transitive. Although this choice limits the scope of our results, we conjecture that economic forces in vertex-transitive networks extend to a broader class of networks.
Informally, a vertex-transitive network is a network which `looks the same' at every node. More precisely, we adopt the following definition.
\begin{definition}[Vertex transitivity]
A network $\mathcal{G}$ is vertex-transitive if and only if for any two nodes $i$ and $j$ there exists a mapping $\phi$ such that $\phi(i)=j$ while the structure of $\mathcal{G}$ is preserved: $A(\kappa_1,\kappa_2)=A(\phi(\kappa_1),\phi(\kappa_2))$ for all $\kappa_1,\kappa_2 \in V$.
\end{definition}
The two leftmost networks in figure \ref{fig:VTnetworks} are examples of vertex-transitive networks. While every agent in a vertex-transitive network has the same number of other agents whom she is linked to (regular network), the converse is not necessarily true. As an example, the last network in figure \ref{fig:VTnetworks} is regular but not vertex-transitive.  
\begin{figure}[h]
\centering
\includegraphics[width=1\linewidth]{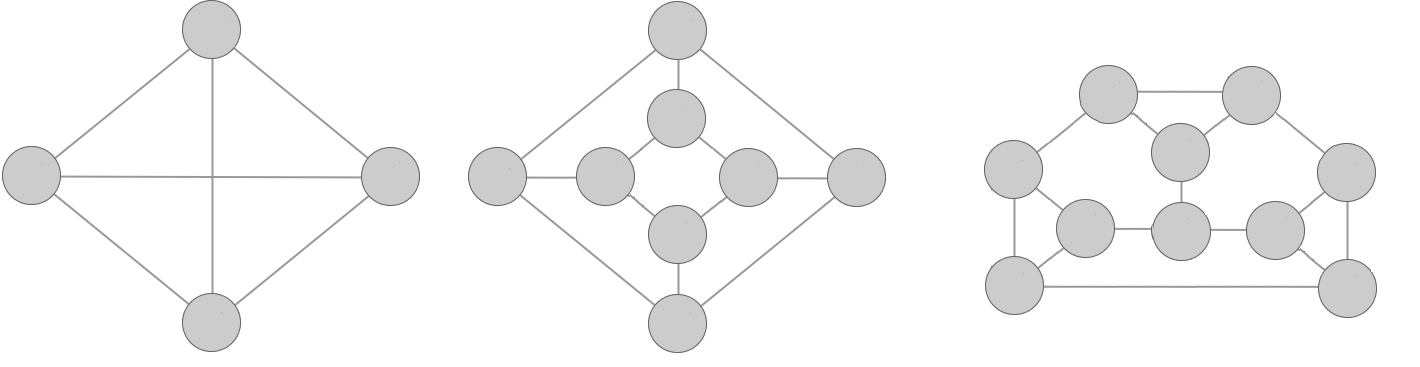}
\caption{Several 3-regular networks. The complete network with 4 agents and the middle network are vertex-transitive networks. The last network is an example of a network which is regular but not vertex-transitive.} 
\label{fig:VTnetworks}
\end{figure}
It is no surprise that every agent in a vertex-transitive networks obtains --- in expectation --- the same number of documents. We state this formally in the next proposition.

\begin{proposition}[Shared documents in vertex-transitive networks]
In any vertex-transitive network, $D_i = D_j$ for all $i,j \in V$. 
\end{proposition}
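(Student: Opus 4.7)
The plan is to exploit the defining property of vertex-transitivity---the existence of a structure-preserving bijection $\phi$ with $\phi(i)=j$---to show that each term in the sum defining $D_i$ has a matching term in the sum defining $D_j$, after re-indexing along $\phi$.

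First, I would prove the intermediate symmetry statement that for any automorphism $\phi$ of $\mathcal{G}$ and any nodes $a,b\in V$, we have $P_{ab}=P_{\phi(a)\phi(b)}$. Since $\phi$ preserves adjacency, it induces a bijection between the edges of $\mathcal{G}$ and thus between paths from $a$ to $b$ and paths from $\phi(a)$ to $\phi(b)$; by \eqref{eq:P-paths}, $P_{ab}$ is the probability of the union, over paths $u\in U_{a,b}(\mathcal{G})$, of the events $\{u\in U_{a,b}(\mathcal{T}_a)\}$. Because the Bernoulli variables $X_{ij}$ associated to distinct edges are i.i.d.\ with common parameter $p$, the joint law of the edge-presence indicators is invariant under any permutation of edges, and in particular under the edge permutation induced by $\phi$. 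Hence the two unions of events have identical probabilities, yielding $P_{ab}=P_{\phi(a)\phi(b)}$.

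Second, I would apply this symmetry to the expression \eqref{eq:D} for $D_i$. Pick the automorphism $\phi$ with $\phi(i)=j$ guaranteed by vertex-transitivity. Then
\begin{align*}
D_i \;=\; \sum_{k\in V} P_{ki} \;=\; \sum_{k\in V} P_{\phi(k)\,\phi(i)} \;=\; \sum_{k\in V} P_{\phi(k)\,j}.
\end{align*}
Since $\phi$ is a bijection on $V$, the substitution $k' = \phi(k)$ re-indexes the sum to $\sum_{k'\in V} P_{k' j} = D_j$, proving the claim.

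The only mildly non-obvious step is the first one, namely justifying that an automorphism preserves the probability $P_{ab}$. The point here is that the edges of $\mathcal{G}$ carry identically distributed, independent Bernoulli weights, so permuting edges (by the action of $\phi$) leaves the joint distribution of the transmission network invariant; this is what allows one to pass from a purely combinatorial bijection on paths to an equality of probabilities. Once this is recognized, the rest of the argument is a one-line re-indexing.
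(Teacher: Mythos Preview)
Your argument is correct and follows essentially the same route as the paper: both pick an automorphism $\phi$ with $\phi(i)=j$, use the invariance $P_{ab}=P_{\phi(a)\phi(b)}$, and then re-index the sum $\sum_{k}P_{ki}$ via $\phi$ to obtain $D_j$. The only difference is that you spell out the justification for $P_{ab}=P_{\phi(a)\phi(b)}$ via the i.i.d.\ edge weights and the induced bijection on paths, whereas the paper simply asserts this consequence of structure preservation.
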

\begin{proof}
By vertex-transitivity there exists a $\phi$ such that $\phi(i)=j$ while the structure is preserved, which means that $P_{\ell k}=P_{\phi(\ell) \phi(k)}$. Consequently by (\ref{eq:D}),
$
D_i = \sum_{k \in V} P_{k,i} = \sum_{\phi(k) \in V} P_{\phi(k),j} = D_j,
$
yielding the result.
\end{proof}

Since on vertex-transitive networks all elements in $\mathbf{D}$ are identical (for all values of $p$), we will adopt the notation $D_i=D$. 
Complete graphs and ring graphs are both examples of vertex-transitive networks.

\begin{example}[Ring graph]\label{exa:ring}
Consider a ring graph with $n$ nodes (see Figure~\ref{fig:Interesting}). Let $\dist(i,j)=\min\{|i-j|,n-|i-j|\}$ be the distance between nodes $i$ and $j$. By a simple inclusion-exclusion reasoning, observe that if $j\neq i$ then $$P_{ij}=p^{\dist(i,j)}+p^{n-\dist(i,j)}-p^n.$$
Hence, by summing over the nodes 
$$D=1+2\sum_{\ell=1}^{n-1} p^\ell - (n-1)p^n=\frac{1+p - p^n (n+1) + p^{n+1} (n-1)}{1-p}.$$
Note that $D\to \frac{1+p}{1-p}$ as $n\to \infty$. In contrast, recall from Example~\ref{exa:star} that $D_i$ is unbounded in $n$ on star graphs.
\end{example}

Ring and star graphs are simple to deal with because the number of possible paths between two nodes is small. On the contrary, the complete graph has a very large number of possible connecting paths. Nevertheless, some quantities can be explicitly computed.

\begin{example}[Complete graph]\label{exa:complete}
For the sake of clarity, we denote by $D^{n}$ and $P_{ij}^n$ the expected number of documents and the generic transmission probability on the complete graph $K_n$, respectively. Due to transitivity, 
$$ D^n=1+(n-1)P_{ij}^n$$
and for small $n$ we easily see that $P_{ij}^2=p$ and $P_{ij}^3=p+p^2-p^3$. 
To obtain some more general expressions, let $Q^n$ denote the probability that any document reaches all nodes in $K_n$. Then, $Q^1=1$ and 
\begin{equation}\label{eq:Qk}
Q^k=1- \sum_{\ell=1}^{k-1} {{k-1} \choose {\ell-1}} (1-p)^{\ell (k-\ell)} Q^\ell.
\end{equation}
In turn, 
\begin{equation}\label{eq:Pk}
P_{ij}^n= \sum_{k=2}^{n} {n-2 \choose k-2} (1-p)^{k (n-k)} Q^k.
\end{equation}
These formulas, proved in the Appendix, allow for the numerical evaluation of $D$ on graphs of moderate size, 
{ as shown in Figure~\ref{DRingComplete}.
For large $n$, it is useful to consider the bounds
$$1- (1-p) (1-p^2)^{n-2}\le P_{ij}^n\le 1- (1-p)^{n-1}.$$ 
The lower bound can be obtained by considering only propagation across paths of length at most two. The upper bound can be obtained by considering that the document from $i$ cannot reach vertex $j$ unless at least one edge reaches $j$ in graph $\mathcal{T}_i$.
These two bounds together imply that  $D_i$ is asymptotically linear in $n$.
}
\end{example}
\begin{figure}
\centering
  \includegraphics[width=1\linewidth]{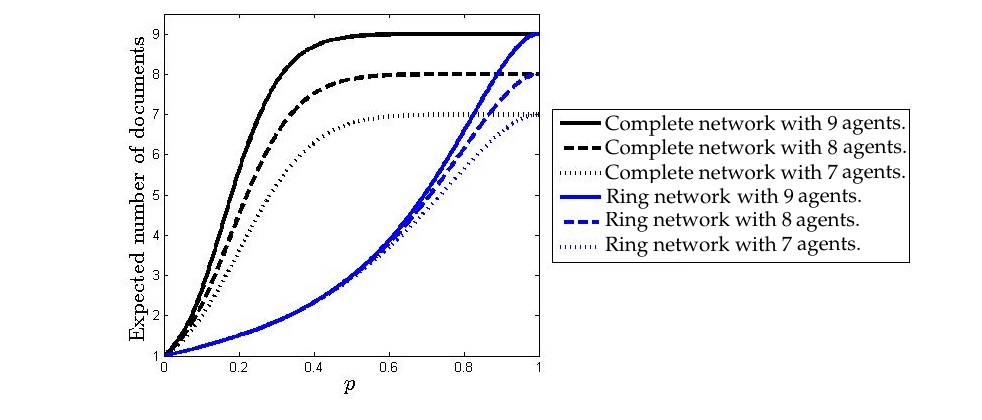}
  \caption{Computations on ring and complete graphs illustrate that the expected number of documents $D$ obtained by each agent is increasing in the density of the network and in $p$. 
  Note that for any graph for which the ring is a subgraph, every $D_{i}$ must be higher than $D$ in the ring and lower than $D$ in the complete graph.}
  \label{DRingComplete}
\end{figure}

\section{Security under random attacks}\label{sec:random}
Security investments are conveniently modeled as the outcome of a game between agents.  
In this section, we look at the social optimum and the equilibria of this security game in the presence of a random attack. The game with a strategic attack is considered in Section~\ref{sec:security}.

The security game with random attacks is defined as follows. A random attack is defined by the uniform attack vector $$a_i=\frac1n \quad \forall i,$$ which is known to all agents.The player set is the set of agents or nodes $V$. The strategy set of agent $i$ is $Q_i=[0,1]$. The reward of each agent $i$ is defined { as the probability that her own document is safe minus the incurred cost, that is,}
\begin{align}\label{eq:ut}
\Pi_i = 1 - \Pr\{x_i = 1\} - c(q_i),
\end{align}
where $\Pr\{x_i = 1\}$ is given in (\ref{eq:probdamage}) and $c(q_i)$ is the cost agent $i$ incurs for choosing $q_i$. We assume that
\begin{align*}
c(q) = \frac{1}{2}\alpha q^2
\end{align*}
for some $\alpha \geq 1$. The choice of a quadratic cost is made for simplicity: the analysis can be extended to other smooth convex increasing functions. The choice of $\alpha$, instead, is meant to make the cost ``large'', so to rule out trivial game outcomes with maximal investments. Also this assumption can be relaxed at the price of more involved analysis.

In this setting each agent attempts to maximize his/her reward while disregarding the utilities of the others. This is described by a \emph{noncooperative game} 
$(V,\{Q_i\}_{i\in V}, \{\Pi_i\}_{i\in V})$ 
with player set $V$. Any player $i\in V$ has strategy set $Q_i$ and payoff function $\Pi_i$. For these games, the classical definition of {Nash equilibrium} is of interest: an investment level $\mathbf{q}^N$ is a pure strategy Nash equilibrium if for any player $i$ and any investment level $q_i \in [0,1]$ unilateral deviation does not pay,
\begin{align*}
\Pi_i(\{q^N_i, \mathbf{q}^N_{-i}\}) \geq \Pi_i(\{q_i, \mathbf{q}^N_{-i}\}).
\end{align*}
Here $(\{q_i, \mathbf{q}^N_{-i}\})$ denotes the vector $\mathbf{q}^N$ where component $i$ is replaced by $q_i$. In security games under random attack, the Nash equilibrium has a simple structure.

\begin{theorem}[Equilibrium against random attack] 
In a security game facing a random attack, the {  pure strategy} Nash equilibrium $\mathbf{q}^{N,R}$ is unique and is equal to
\begin{align} 
q^{N,R}_i = \frac{1}{\alpha n} \quad \forall i.
\end{align}
\end{theorem}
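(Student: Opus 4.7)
The plan is to compute each agent's best response directly from the first-order condition on their reward, observe that it does not depend on the other agents' choices, and then invoke strict concavity plus feasibility to conclude existence and uniqueness.

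First, substitute the uniform attack vector $a_j = 1/n$ into the expression \eqref{eq:probdamage} for $\Pr\{x_i=1\}$, so that the reward becomes
\begin{align*}
\Pi_i(\mathbf{q}) = 1 - \frac{1}{n}\sum_{j\in V}(1-q_j)P_{ij} - \frac{1}{2}\alpha q_i^2.
\end{align*}
The only term in this sum that involves agent $i$'s own decision variable $q_i$ is the one indexed by $j=i$, which contributes $\frac{1}{n}(1-q_i)P_{ii}$. Since $P_{ii}=\Pr\{\text{there exists a path from $i$ to $i$ in }\mathcal{T}_i\}=1$ by convention (agent $i$ always owns document $d_i$), the partial derivative is
\begin{align*}
\frac{\partial \Pi_i}{\partial q_i} = \frac{1}{n} - \alpha q_i.
\end{align*}

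Next, I would argue that each agent has a strictly dominant strategy. The second derivative $\partial^2 \Pi_i/\partial q_i^2 = -\alpha < 0$, so $\Pi_i$ is strictly concave in $q_i$, and crucially the first-order condition depends on neither $\mathbf{q}_{-i}$ nor on the network $\mathcal{G}$. Setting $\partial \Pi_i/\partial q_i = 0$ yields the unconstrained maximizer $q_i^\star = 1/(\alpha n)$. Feasibility follows from $\alpha \ge 1$ and $n\ge 1$, which give $q_i^\star \in (0,1]\subset [0,1]$, so the box constraints do not bind.

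Finally, because $q_i^\star = 1/(\alpha n)$ is the unique unconstrained maximizer of $\Pi_i(\cdot,\mathbf{q}_{-i})$ for every choice of $\mathbf{q}_{-i}$, it is the unique best response of agent $i$ in the whole game. A strategy profile is a Nash equilibrium iff every player plays a best response, so the only candidate is $q_i^{N,R} = 1/(\alpha n)$ for all $i$, and this profile trivially satisfies the equilibrium inequality.

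No real obstacle arises here: the argument hinges entirely on the observation that, under the uniform attack vector, agent $i$'s security level $q_i$ enters her own reward only through the coefficient $a_i P_{ii}=1/n$, so the externalities embedded in the rest of the $P_{ij}$ entries play no role. This is precisely why random attacks will produce systematic under-investment compared to the social optimum later in the paper, where the summation over $j$ through $P_{ji}$ reappears.
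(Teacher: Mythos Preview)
Your proof is correct and follows essentially the same approach as the paper: substitute the uniform attack vector, compute $\partial\Pi_i/\partial q_i=\frac{1}{n}-\alpha q_i$, use strict concavity, and check that the unconstrained maximizer lies in $[0,1]$. The only cosmetic difference is that the paper verifies feasibility by checking the sign of the derivative at the endpoints, whereas you verify $1/(\alpha n)\in(0,1]$ directly; your explicit mention of $P_{ii}=1$ and the dominant-strategy interpretation are welcome clarifications.
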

\begin{proof}
The utility of agent $i$ reads $\Pi_i = 1 - \frac1n \sum_j (1-q_j) P_{ij}- \frac12 \alpha q_i^2.$
We easily see that $\frac{\partial \Pi_i}{\partial q_i}=\frac1n-\alpha q_i$ and $\frac{\partial^2 \Pi_i}{\partial q_i^2}=-\alpha<0$. Since $\frac{\partial \Pi_i}{\partial q_i}(0,\mathbf{q}_{-i})>0$ and $\frac{\partial \Pi_i}{\partial q_i}(1,\mathbf{q}_{-i})<0$, we conclude that the largest utility is obtained with investment $q^{N,R}_i = \frac{1}{\alpha n}$, the unique only Nash equilibrium.
\end{proof}

Some remarks are in order. Firstly, the Nash equilibrium does not depend on $p$ or on the network. The economic motivation for this result is intuitive. As an agent cannot control a possible external loss in a random attack, an increase in investments does not lead to a reduced risk that his document is stolen through another agent. This forces an agent --- in a non-cooperative setting --- to ignore the external risk and to find the optimal trade-off between investment costs and protection against a direct loss. 
Secondly, the investment levels at the Nash equilibrium go to zero as the number of nodes goes to infinity. This is because the risk of being attacked is diluted in large networks.

\smallskip
{ In contrast with the above non-cooperative setting}, we may consider a cooperative setting where all agents cooperate to maximize the social utility, which equals the sum of the agents' utilities:
\begin{align}\label{eq:social}
S(\mathbf{q}) &= \sum_{i \in V} \Pi_i = n-\mathbb{E}(|\mathbf{x}|) - \sum_{i\in V}c(q_i).
\end{align}
By the continuity of $S$ on its compact domain $[0,1]^n$, the function $S$ must attain a maximum. That maximum is said to be the social optimum.

\begin{theorem}[Social optimum against random attack]\label{}
In a network facing a random attack, the social optimum $\mathbf{q}^{O,R}$ is unique and is equal to
\begin{align}\label{X6formule3s}
q^{O,R}_i = \frac{D_i}{\alpha n} \quad \forall i.
\end{align}
\end{theorem}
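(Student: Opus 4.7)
The plan is to reduce the optimization of $S$ to $n$ independent one-variable problems and exploit strict concavity. Substituting the uniform attack $a_j = 1/n$ into (\ref{eq:totaldamage}) gives $\mathbb{E}(|\mathbf{x}|) = \frac{1}{n}\sum_{j\in V}(1-q_j)D_j$, so by (\ref{eq:social})
\begin{equation*}
S(\mathbf{q}) = n - \frac{1}{n}\sum_{j\in V}(1-q_j)D_j - \frac{\alpha}{2}\sum_{i\in V} q_i^2.
\end{equation*}
This is a sum of terms each depending on a single coordinate $q_i$, plus a constant, which is the structural feature that makes the problem tractable.

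Next I would compute derivatives: $\partial S / \partial q_i = D_i/n - \alpha q_i$ and $\partial^2 S / \partial q_i^2 = -\alpha$, with mixed partials vanishing. Hence the Hessian equals $-\alpha I$, which is negative definite, so $S$ is strictly concave on the compact convex set $[0,1]^n$. A strictly concave function on a compact convex domain has a unique maximizer, and whenever the critical point is interior it is characterized by the first-order conditions. Setting $\partial S / \partial q_i = 0$ yields the candidate $q^{O,R}_i = D_i/(\alpha n)$.

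To close the argument I need to verify feasibility, i.e. that the critical point lies in $(0,1)^n$. Positivity is immediate from (\ref{eq:D}) since $D_i \geq 1$. For the upper bound, I would use $D_i = \sum_{j\in V} P_{ji} \leq n$ together with the standing assumption $\alpha \geq 1$, which gives $q^{O,R}_i \leq 1/\alpha \leq 1$. This feasibility check is the only mild subtlety; everything else is essentially a calculation, and there is no main obstacle to speak of. I would conclude by noting the economic interpretation of the comparison with the Nash result: in the social problem the coefficient of $q_i$ in the first-order condition is $D_i$ rather than $1$, which reflects that the socially optimal choice internalizes the protection agent $i$ provides for all $D_i$ documents she is expected to hold, exactly the externality underlying the paper's narrative on under-investment.
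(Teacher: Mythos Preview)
Your proof is correct and follows essentially the same route as the paper: write $S$ explicitly for the uniform attack, observe separability and strict concavity via the Hessian $-\alpha I$, solve the first-order conditions, and verify feasibility using $1\le D_i\le n$ and $\alpha\ge 1$. The only cosmetic difference is that the paper checks the signs of $\partial S/\partial q_i$ at the endpoints $q_i=0,1$ rather than directly bounding $D_i/(\alpha n)$, which amounts to the same thing.
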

\begin{proof}
By \eqref{eq:totaldamage} the global utility reads $S(\mathbf{q}) = n - \frac1n \sum_j (1-q_j) D_{j}- \frac\alpha2\sum_j  q_j^2.$
We easily see that $\frac{\partial S}{\partial q_i}=\frac1n D_i-\alpha q_i$ and 
$$\frac{\partial^2  S}{\partial q_i^2}=-\alpha<0 \qquad \frac{\partial^2  S}{\partial q_i\partial q_j}=0,$$ implying that $S$ is a concave function of $\mathbf{q}$. Since $\frac{\partial S}{\partial q_i}(0,\mathbf{q}_{-i})>0$ and $\frac{\partial S}{\partial q_i}(1,\mathbf{q}_{-i})<0$ because $D_i\leq n$ and $\alpha\geq 1$, we conclude that $\mathbf{q}^{O,R}$ with $q^{O,R}_i = \frac{D_i}{\alpha n}$ is the unique maximizer.
\end{proof}

{ 
Comparing these results shows that the Nash equilibrium features {\em under-investments} relative to the social optimum.
This happens because in the cooperative setting an agent also invests to protect documents of others. This additional effort leads to higher investments in security, which depend on the network and the probability $p$. 
The following examples illustrate these observations.}
\begin{example}[Ring network, cont'd]\label{exa:ring2}
Consider the ring network studied in Example~\ref{exa:ring} and assume $\alpha=1$. Then,
the socially optimal investments are 
$$q^{O,R}_i = \frac{1+p-p^n (n+1) + p^{n+1}(n-1)}{(1-p)n}\quad i\in V,
$$
and the Nash equilibrium investments remain $q^{N,R}_i=1/n$. Both these quantities decrease to zero as $n$ goes to infinity.
\end{example}

\begin{example}[Star network, cont'd]
Consider the star network studied in Example~\ref{exa:star} and assume $\alpha=1$. Then,
the socially optimal investments are 
$$q^{O,R}_i = \begin{cases}\frac{(n-1)p + 1}{ n} &\quad i=1\\
\frac{(n-2) p^2+ p + 1}{ n} &\quad i>1
\end{cases}$$
Observe that all investments are non-vanishing for $n\to\infty$ and that the central node 1 supports the highest investment. On the contrary, the Nash equilibrium investments $q^{N,R}_i=1/n$ go to zero for $n\to\infty$.
\end{example}

{ 
\begin{example}[Complete network, cont'd]\label{exa:complete2}
Consider the complete network studied in Example~\ref{exa:complete} and assume $\alpha=1$. Then,
the socially optimal investments converge (exponentially fast in $n$) to the maximum investment:  
$$q^{O,R}_i \to 1 \quad\text{as $n\to\infty$},$$
while the equilibrium investments go to zero for $n\to\infty$.
\end{example}

Optimal investments are larger on networks that are well connected. Indeed, they are vanishing in the limit for large $n$ for the cycle graph, which is poorly connected, whereas investments are non-vanishing in the limit for large $n$ on well-connected networks such as stars and complete graphs. Consistently, the complete graph requires the highest security investments.
}

\section{Security under strategic attacks}\label{sec:security}

In the previous section we analysed the security game in the presence of a random attack. As of this section we allow for a strategic attack by the adversary. As such, the adversary and agents are involved in a two-stage game, the so-called \emph{Stackelberg game} \cite{peters2016game}. In the first stage, the agents determine their investments in security. Thereafter, in the second stage, the adversary selects an attack strategy. Such a game is solved by a {Stackelberg equilibrium}.

\subsection{Definition of strategic attack}

We start the analysis with the strategy of the attacker. 
The vector $\mathbf{a}$ is chosen by the attacker in an optimal way, based on the knowledge of the network and of the vector $\mathbf{q}$.
More precisely, we assume that the strategy of the attacker is an optimal trade-off between the expected number of stolen documents and the cost of this attack, solving the following optimization problem
\begin{align}\label{eq:program}
	 \max_{\mathbf a} \quad & \mathbb{E}(|\mathbf{x}|) - \sum_{i\in V}\psi(a_i) \\ \nonumber
	 \text{subject to}\quad & |\mathbf{a}|=1 \text{ and } a_i \geq 0 \text{ for all }i \in V. \nonumber
\end{align}
Here the expected number of stolen documents is $\mathbb{E}(|\mathbf{x}|)$ and the function $\map{\psi}{[0,1]}{\realnonnegative}$ defines the cost the attacker incurs for choosing $\mathbf{a}$. 
Note that this framework is consistent with the attacker playing the Stackelberg game after the defending agents. 
%
%
In this paper we assume quadratic costs
\begin{align*}
\psi(a) = \frac{1}{2}\omega a^2
\end{align*}
with $\omega \geq 1$.  
Note that this definition implies that a more precise attack is more costly than a more random one. Similarly to what was discussed for the agent's cost $c$, extensions to other convex increasing functions are possible.
By using the expression for $\mathbb{E}(|\mathbf{x}|)$ in~\eqref{eq:totaldamage}, the problem becomes
\begin{align}\label{eq:program-quadratic}
	\max_{\mathbf a}\quad   & \sum_{i=1}^n \left(a_i (1-q_i) D_i -\frac12 \omega a_i^2\right)\\
	\text{subject to}\quad & |\mathbf{a}|=1 \text{ and } a_i \geq 0 \text{ for all }i \in V. \nonumber
\end{align}

The Karush-Kuhn-Tucker (KKT) conditions can be used to solve~\eqref{eq:program-quadratic}. As the objective function is strictly concave, these  conditions are necessary and sufficient to obtain the optimal solution. The KKT conditions read
\begin{subequations}\label{X4formule3}
\begin{align}
\label{X4formule3-1}& (1-q_i)D_i - \omega a_i + \lambda+ \kappa_i=0, \qquad \forall i,\\
\label{X4formule3-4}& \sum_{i \in V} a_i = 1,\\
\label{X4formule3-3}& a_i \geq 0, \qquad \forall i,\\ 
\label{X4formule3-5}& \kappa_i\geq 0, \qquad \forall i,\\
\label{X4formule3-2}& \kappa_i a_i = 0,\qquad \forall i,
\end{align}
\end{subequations}
where $\lambda \in \mathbb{R}$ and $\kappa_i \in \mathbb{R}^+$ for all $i$ are the Lagrange multipliers corresponding to the constraints \eqref{X4formule3-4} and \eqref{X4formule3-3} respectively. Solving these conditions results in the following characterization of the optimal attack strategy.

\begin{proposition}[Optimal attack vector]
The optimal attack vector $\mathbf{a}^*$ chosen by the attacker, solving~\eqref{eq:program-quadratic}, is given by the unique solution $(\lambda^*,\mathbf a^*)$ to the equations
\begin{subequations}\label{eq:find-attack}
\begin{align}
\omega =& \sum_{i \in V} \max\{0,(1-q_i)D_i + \lambda\}, \label{eq:findlambda}\\
a_i =&\frac{1}{\omega}\max\{0,(1-q_i)D_i + \lambda\}\quad \forall i\in V. \label{eq:finda}
\end{align}
\end{subequations}
Consequently, $\mathbf a^*$ is a function of $\mathbf q$ and $\mathbf D$ (and in turn of $p$ and of the topology of the network).
\end{proposition}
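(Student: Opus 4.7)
The plan is to solve the KKT system \eqref{X4formule3} directly and derive the claimed expressions; existence and uniqueness of the optimizer follow from strict concavity of the objective and compactness of the feasible simplex, which the paper has already noted before stating the proposition.

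First, I would use the stationarity condition \eqref{X4formule3-1} to write $\omega a_i = (1-q_i)D_i + \lambda + \kappa_i$ for every $i$. Combined with $\kappa_i \geq 0$ and the complementary slackness $\kappa_i a_i = 0$, a simple case distinction gives the compact formula
\begin{equation*}
a_i = \frac{1}{\omega}\max\{0,(1-q_i)D_i + \lambda\}.
\end{equation*}
Indeed, if $a_i>0$ then $\kappa_i=0$ and $\omega a_i=(1-q_i)D_i+\lambda$, which must therefore be nonnegative; conversely, if $(1-q_i)D_i+\lambda>0$ then setting $a_i=0$ would require $\kappa_i<0$, contradicting \eqref{X4formule3-5}, so $a_i=\omega^{-1}((1-q_i)D_i+\lambda)>0$. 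The remaining case $(1-q_i)D_i+\lambda\le 0$ forces $a_i=0$ with $\kappa_i=-(1-q_i)D_i-\lambda\ge 0$, which is consistent with all KKT conditions. This proves \eqref{eq:finda}.

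Next I would substitute this expression into the normalization \eqref{X4formule3-4}, which immediately yields \eqref{eq:findlambda}:
\begin{equation*}
\omega=\sum_{i\in V}\omega a_i=\sum_{i\in V}\max\{0,(1-q_i)D_i+\lambda\}.
\end{equation*}
It remains to argue that \eqref{eq:findlambda} admits a unique solution $\lambda^*$. Denote the right-hand side by $F(\lambda)$. It is continuous, piecewise linear, nondecreasing in $\lambda$ (as a sum of nondecreasing functions), and strictly increasing once at least one summand is strictly positive; moreover $F(\lambda)=0$ for $\lambda\le -\max_i(1-q_i)D_i$ and $F(\lambda)\to\infty$ as $\lambda\to\infty$. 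Hence the equation $F(\lambda)=\omega>0$ has a unique root $\lambda^*$, which then determines $\mathbf a^*$ through \eqref{eq:finda}.

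The only mildly delicate step is the monotonicity/uniqueness argument for $\lambda^*$, since $F$ is only piecewise linear with possible flat pieces near the lower end; however, at any $\lambda$ with $F(\lambda)>0$ at least one summand is active and contributes slope $1$, making $F$ strictly increasing in a neighborhood of any such $\lambda$, which is enough for uniqueness. All other steps are routine manipulation of the KKT system.
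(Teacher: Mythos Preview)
Your proof is correct and follows essentially the same route as the paper: derive the formula for $a_i$ from the KKT stationarity and complementary slackness conditions, substitute into the normalization constraint to obtain \eqref{eq:findlambda}, and then argue that $\lambda^*$ is unique. The only minor difference is in the uniqueness argument: the paper proceeds by contradiction (assuming two solutions $\lambda_1<\lambda_2$ and computing $\omega-\omega<0$), whereas you use the equivalent but slightly cleaner observation that $F(\lambda)=\sum_i\max\{0,(1-q_i)D_i+\lambda\}$ is continuous, nondecreasing, and strictly increasing on $\{\lambda:F(\lambda)>0\}$, so the level set $\{F=\omega\}$ with $\omega>0$ is a singleton.
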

\begin{proof}
By substituting \eqref{X4formule3-1} into \eqref{X4formule3-4} and noting that by \eqref{X4formule3-2} $\kappa_i=0$ if $a_i>0$, the multiplier $\lambda^*$ must solve
\begin{align*}
\omega = \sum_{i \in V} \max\{0,(1-q_i)D_i + \lambda\} 
\end{align*}
To show that $\lambda^*$ is unique, suppose that there are two solutions of (\ref{eq:findlambda}): $\lambda_1$ and $\lambda_2$. Without loss of generality, assume that $\lambda_1<\lambda_2$ and set $V_k = \{ i \in V~|~(1-q_i)D_i + \lambda_k > 0 \}$ for $k=1,2$. Obviously, $V_1 \subseteq V_2$. Also note that
\begin{align*}
0 = \omega - \omega &= \sum_{i \in V_1} \big((1-q_i)D_i + \lambda_1\big) - \sum_{i \in V_2} \big((1-q_i)D_i + \lambda_2\big)
\\&= -\sum_{i \in V_2\setminus V_1} (1-q_i)D_i + \lambda_1\card{V_1} -\lambda_2\card{V_2}< 0,
\end{align*}
which gives us a contradiction. So, $\lambda^*$ is unique.
Next, \eqref{X4formule3-1} directly leads to~\eqref{eq:finda} and $a_i(\mathbf{q})$ is a well-defined function of $q$ by the uniqueness of $\lambda^*$.
\end{proof}

The example below illustrates the optimal strategic attack probabilities for star networks.
\begin{example}[Star network, cont'd]\label{exa:a-star} 
Consider the star network studied in Example~\ref{exa:star} and, by symmetry, assume that $q_2=\ldots=q_n$. 
We begin by looking for solutions to~\eqref{eq:findattack} such that $a^*_i>0$ for all $i$.
In this case, equations~\eqref{eq:findlambda} and~\eqref{eq:finda} become 
\begin{align*}
1=\omega&=(1-q_1) D_1 + (n-1) (1-q_2) D_2 + n \lambda^*, \\
a^*_1=\omega a^*_1&= (1-q_1) D_1+ \lambda^*, \\
a^*_2=\omega a^*_2&= (1-q_2) D_2 + \lambda^*,
\end{align*}
and $a^*_k=a^*_2$ for $k=3,\ldots,n$.
Solving the first equation for $\lambda^*$ and substituting that in the other two equations yields
\begin{align*}
a_1^*&= \frac{1}{n} + \frac1\omega(1-\frac{1}{n}) \left( (1-q_1) D_1 - (1-q_2) D_2 \right),
\\
a_2^*&= \frac{1}{n} { +} \frac1\omega  \frac1n \left( (1-q_2) D_2  - (1-q_1) D_1 \right).
\end{align*}

Let $\Delta=(1-q_1) D_1 - (1-q_2) D_2$ and observe that $\Delta$ can be either negative or positive and its magnitude is approximately linear in $n$. Since $a^*_1-a^*_2=\frac1\omega \Delta$,  we observe that $a^*_1>a^*_2$ when 
$$ \frac{1-q_1}{1-q_2}>\frac{D_2}{D_1}.$$
Let us refer to $1-q_i$ as the ``risk'' taken by agent $i$. Since $D_2/D_1\to p$ when $n\to \infty$, we may say that on large star networks, the center is more likely to be attacked than the leaves when the center takes more than $p$ times the risk taken by the leaves. 

When $n$ grows larger, the gap between the two attack probabilities increases, until $a^*_1=1$ and $a^*_2=0$ (if $\Delta>0$) or until $a^*_1=0$ and $a^*_2=\frac1{n-1}$ (if $\Delta<0$). 
The former vector is indeed the optimal solution when $ \omega \le \Delta$, whereas the latter is optimal when $ \omega \le -(n-1) \Delta.$

\end{example}


{ 
Since finding explicit solutions to~\eqref{eq:find-attack} quickly becomes unfeasible on more complex graphs, we instead set to investigate qualitative properties of the optimal attack vector. In the next result, we derive how the optimal attack probabilities depend on the investment levels $\bf q$. The formulas confirm the intuition that the optimal attack probability $a^*_i$ is decreasing in the investments $q_i$ of agent $i$, and increasing in the investments $q_j$ of agents $j\neq i$.
}

\begin{proposition}[How attacks depend on investments]\label{prop:marginal-changes}
The marginal changes of the optimal attack probability $a_i^*>0$ to $q_i$ and to $q_j$ for agent $j$ with $a^*_j>0$, are respectively given by
\begin{align}\label{eq:derivatives}
\frac{\partial a^*_i}{\partial q_i} = -\frac{n^*-1}{\omega n^*}D_i ~~\text{ and }~~ \frac{\partial a^*_i}{\partial q_j} = \frac{1}{\omega n^*}D_j
\end{align}
where $n^*=\card{\setdef{i\in V}{a^*_i>0}}$ is the number of agents with strict positive probability of being attacked. In particular, $a_i^*$ is nonincreasing in $q_i$ and nondecreasing in $q_j$.
\end{proposition}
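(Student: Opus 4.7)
The plan is to work directly from the KKT characterization given in~\eqref{eq:find-attack}, eliminate the multiplier $\lambda^*$, and then differentiate implicitly with respect to $q_i$ and $q_j$, keeping the active set fixed in a neighborhood of the point of interest.

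First I would introduce the active set $V^* = \{i \in V : a_i^*>0\}$ with $n^*=|V^*|$. For every $i\in V^*$, equation~\eqref{eq:finda} becomes the linear relation $\omega a_i^* = (1-q_i)D_i + \lambda^*$. Summing this over $V^*$ and using the fact that $\sum_{i\in V^*} a_i^* = 1$ (since all other components vanish) yields the closed-form expression
\begin{equation*}
\lambda^* \;=\; \frac{\omega - \sum_{k\in V^*}(1-q_k)D_k}{n^*}.
\end{equation*}
Substituting this back gives, for every $i\in V^*$,
\begin{equation*}
a_i^* \;=\; \frac{1}{\omega}\Bigl((1-q_i)D_i + \lambda^*\Bigr)
\;=\; \frac{1}{n^*} + \frac{1}{\omega n^*}\Bigl(n^*(1-q_i)D_i - \sum_{k\in V^*}(1-q_k)D_k\Bigr).
\end{equation*}

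Next I would argue that, as long as the inequalities $a_k^*>0$ for $k\in V^*$ and $(1-q_k)D_k+\lambda^*<0$ for $k\notin V^*$ hold strictly (as they do at the point of interest, where $a_i^*>0$ and $a_j^*>0$), the set $V^*$ is locally constant in $\mathbf q$ by continuity of $\lambda^*$ in $\mathbf q$. Therefore the closed-form expression above for $a_i^*$ is a smooth function of $\mathbf q$ in a neighborhood, and one can simply differentiate it term by term. Differentiating with respect to $q_i$ (with $i\in V^*$) gives
\begin{equation*}
\frac{\partial a_i^*}{\partial q_i} \;=\; \frac{1}{\omega n^*}\bigl(-n^*D_i + D_i\bigr) \;=\; -\frac{n^*-1}{\omega n^*}\, D_i,
\end{equation*}
and differentiating with respect to $q_j$ for $j\in V^*$, $j\neq i$, gives
\begin{equation*}
\frac{\partial a_i^*}{\partial q_j} \;=\; \frac{1}{\omega n^*}\, D_j,
\end{equation*}
which is exactly the claim. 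The monotonicity statements (nonincreasing in $q_i$, nondecreasing in $q_j$) follow since $D_k>0$ for all $k$ and $n^*\ge 1$.

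The routine calculation is straightforward; the only genuinely delicate point is the justification that $V^*$ does not change locally, so that the derivatives are well-defined in the classical sense. This is the main obstacle, and I would handle it by the strict-inequality/continuity argument sketched above. (One could alternatively phrase the result in terms of one-sided derivatives, which avoids the issue entirely but is less clean.) No additional ingredients are needed: uniqueness of $\lambda^*$ was already established in the preceding proposition, which ensures that the closed-form expression for $a_i^*$ is the genuine optimizer and not a spurious critical point.
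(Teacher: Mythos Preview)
Your proof is correct and follows essentially the same approach as the paper: both start from the KKT relations~\eqref{eq:find-attack}, use the normalization $\sum_{k\in V^*} a_k^*=1$ to eliminate $\lambda^*$, and differentiate. The only cosmetic difference is that you solve for $\lambda^*$ explicitly first and then differentiate the closed-form $a_i^*$, whereas the paper differentiates the KKT relations directly and then solves for $\partial\lambda/\partial q_i$; your version is slightly cleaner and, in fact, more careful than the paper in justifying why the active set $V^*$ is locally constant.
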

\begin{proof}
The marginal changes follow from the KKT-conditions in \eqref{X4formule3}. First note that $\kappa_i = 0$ when $a^*_i >0$. Consequently when we differentiate KKT-condition \eqref{X4formule3-1} with respect to $q_i$ we get 
\begin{align}\label{eq:marginal1}
\nonumber -D_i -\omega \frac{\partial a^*_i}{\partial q_i} + \frac{\partial \lambda}{\partial q_i} = 0\\
\frac{\partial a^*_i}{\partial q_i}=-\frac{D_i}{\omega}+\frac1\omega\frac{\partial \lambda}{\partial q_i}
\end{align}
and --- similarly --- when we differentiate with respect to $q_j$
\begin{align}\label{eq:marginal2}
\nonumber-\omega \frac{\partial a^*_i}{\partial q_j}+  \frac{\partial \lambda}{\partial q_j} = 0\\
\frac{\partial a^*_i}{\partial q_j}=\frac1\omega\frac{\partial \lambda}{\partial q_j}
\end{align}
Next we combine  KKT-condition \eqref{X4formule3-4} with the observations above. First recognize that the equation $\sum_{j} a^*_j=1$ is equivalent to $\sum_{j|a^*_j > 0} a^*_j = 1$. These equations imply
\begin{subequations}\label{EqPartA}
\begin{align}
\label{EqPartA-1}& \sum_{j} \frac{\partial  a^*_j}{\partial  q_i} = 0,\\
\label{EqPartA-2}& \sum_{j|a^*_j > 0} \frac{\partial  a^*_j}{\partial  q_i} = 0 .
\end{align}
\end{subequations}
By combining (\ref{eq:marginal1}), (\ref{eq:marginal2}) and \eqref{EqPartA-2}, it follows that
\begin{equation*}
	-\frac{D_i}{\omega} + \frac{n^*}{\omega}\frac{\partial \lambda}{\partial  q_i} = 0,
\end{equation*}
where $n^*$ is the number of agents with strict positive probability of being attacked. By solving this expression for $\partial \lambda/\partial q_i$ and substituting the result in (\ref{eq:marginal1}) and (\ref{eq:marginal2}), we establish the statement.
\end{proof}

{ Proposition~\ref{prop:marginal-changes} bears further consequences for vertex-transitive networks, where each agent obtains the same number of documents in expectation, $D_i=D$. For this reason, more precise results can be obtained, including the following monotonicity property: if an agent invests more in security than another agent, then his attack probability is lower (and vice versa).}
 
\begin{proposition}[Attacks to vertex-transitive networks]\label{prop:attack-vt}
If the network is vertex-transitive then $a^*_i<a^*_j$ if and only if $q_i>q_j$.
\end{proposition}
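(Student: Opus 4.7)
My plan is to specialize the optimality characterization in~\eqref{eq:find-attack} to the vertex-transitive setting. By the previous proposition on vertex-transitive networks, $D_i = D$ for every $i$, so~\eqref{eq:finda} reduces to
\begin{equation*}
a^*_i = \tfrac{1}{\omega}\max\bigl\{0,\,(1-q_i)D + \lambda^*\bigr\},
\end{equation*}
with a single Lagrange multiplier $\lambda^*$ common to all agents. Hence the dependence of $a^*_i$ on $\mathbf{q}$ factors through $q_i$ alone (modulo the common shift $\lambda^*$), and the mapping $q \mapsto \max\{0,\,(1-q)D + \lambda^*\}$ is weakly decreasing in $q$ and strictly decreasing wherever it is positive. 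The whole proposition is essentially a reformulation of this monotone structure.

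For the direction $a^*_i < a^*_j \Rightarrow q_i > q_j$, I would start from $a^*_j > a^*_i \geq 0$, which forces $a^*_j > 0$ and therefore $\kappa_j = 0$ by the complementary slackness condition~\eqref{X4formule3-2}. KKT condition~\eqref{X4formule3-1} then gives $\omega a^*_j = (1-q_j)D + \lambda^*$, while for $i$ one has $\omega a^*_i = (1-q_i)D + \lambda^* + \kappa_i$ with $\kappa_i \geq 0$. Subtracting yields
\begin{equation*}
\omega(a^*_j - a^*_i) + \kappa_i = (q_i - q_j)D,
\end{equation*}
and the strict positivity of the left-hand side, combined with $\kappa_i \geq 0$, forces $q_i > q_j$.

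For the reverse implication, I would deduce from $q_i > q_j$ the strict inequality $(1-q_i)D + \lambda^* < (1-q_j)D + \lambda^*$, and then apply $\max\{0,\cdot\}$ to both sides to get $a^*_i \leq a^*_j$ immediately. The main obstacle, and the only genuinely delicate point of the argument, is to upgrade this to the strict inequality claimed in the statement: the only way strictness can fail is if both $a^*_i$ and $a^*_j$ equal zero. To rule this out I would combine the normalization $\sum_k a^*_k = 1$ from~\eqref{X4formule3-4} with the support description implicit in the formula above, namely that the support of $\mathbf a^*$ consists of the agents with the smallest investments. Vertex-transitivity keeps all the $D_i$'s equal so that this support is determined purely by the ordering of $\mathbf q$, and one can then show that whenever $q_i > q_j$ occurs in a nondegenerate configuration, the agent $j$ lies in the support and $a^*_j > 0$ follows. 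Handling this boundary cleanly, rather than the algebraic manipulation, is the part of the proof that will require the most care.
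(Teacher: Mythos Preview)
Your forward implication is correct and is essentially the paper's argument in compressed form: the paper first solves~\eqref{eq:findlambda} for $\lambda^*$ and writes the closed form $a^*_i = \tfrac{1}{n^*} - \tfrac{D}{\omega}\bigl(q_i - \tfrac{1}{n^*}\sum_{\ell:a^*_\ell>0} q_\ell\bigr)$ on the support, then reads off the monotonicity, whereas you subtract the stationarity conditions~\eqref{X4formule3-1} directly. Either route yields $a^*_i < a^*_j \Rightarrow q_i > q_j$ without difficulty.

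The gap is in the reverse implication, and it is not a matter of care but of truth: the boundary case you flag cannot be ruled out. Your plan is to argue that $q_i > q_j$ forces $j$ into the support of $\mathbf a^*$. But the support is a lower level set of $\mathbf q$ (determined by the threshold $1+\lambda^*/D$), and $q_j < q_i$ tells you only that $j$ is in the support \emph{if} $i$ is; it says nothing when both lie above the threshold. Concretely, take the triangle ($n=3$, vertex-transitive) with $\omega = 1$ and $p$ such that $D=2$, and set $q_1 = 0$, $q_2 = 0.9$, $q_3 = 0.95$. Solving~\eqref{eq:find-attack} gives $\lambda^* = -1$, $a^*_1 = 1$, $a^*_2 = a^*_3 = 0$: here $q_3 > q_2$ yet $a^*_3 = a^*_2$, so the biconditional fails outright. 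Your hedge about ``nondegenerate configurations'' does not help, since this example is generic.

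The paper's own proof shares this defect: its closing sentence ``Thus, $0 = a^*_i < a^*_j$ is equivalent to $q_i > q_j$'' is justified only in the forward direction (from the two threshold inequalities one gets $q_i \ge \text{threshold} > q_j$, hence $q_i>q_j$, but not conversely). What both arguments actually establish is $a^*_i < a^*_j \Rightarrow q_i > q_j$, equivalently $q_i \le q_j \Rightarrow a^*_i \ge a^*_j$, and this weak monotonicity is in fact all that is invoked downstream in Corollary~\ref{Cor1}.
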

\begin{proof}
Firstly, we rewrite \eqref{eq:findlambda} to obtain
\begin{align*}
\lambda^*&=\frac\omega{n^*}-\frac{D}{n^*} \sum_{\ell: a^*_\ell>0} (1-q_\ell)\end{align*}
Next if $a^*_i>0$ then
\begin{align*}
a^*_i&=\frac1\omega ((1-q_i) D+ \lambda)\\
&=\frac1{n^*} -\frac{D}{\omega} \big(q_i-\frac1{n^*}\sum_{\ell: a^*_\ell>0} q_\ell\big)
\end{align*}
It is then clear that, provided $a_i^*>0$, $a^*_i<a^*_j$ if and only if $q_i>q_j$.
If instead $a^*_i=0$, then we derive the following equivalent inequalities.
\begin{align*}
(1-q_i) D +\lambda^*& \leq 0\\
\lambda^*& \leq -(1-q_i) D \\
\frac\omega{n^*}-\frac{D}{n^*} \sum_{\ell: a^*_\ell>0} (1-q_\ell)& \leq -(1-q_i) D \\
q_i&\geq \frac\omega{Dn^*} +\frac{1}{n^*} \sum_{\ell: a^*_\ell>0}q_\ell.
\end{align*}
At the same time,  $a^*_j>0$ is equivalent to
\begin{align*}0<&\frac{1}{n^*} -\frac{D}{\omega} \big(q_j-\frac1{n^*}\sum_{\ell: a^*_\ell>0} q_\ell\big)\\
\Leftrightarrow\ q_j&<\frac\omega{Dn^*} +\frac{1}{n^*} \sum_{\ell: a^*_\ell>0}q_\ell.
\end{align*}
Thus, $0=a^*_i<a^*_j$ is equivalent to $q_i>q_j$.
\end{proof}

This result immediately leads to the following implications: (a) maximal investments in security guarantee an upper bound on the attack probability; and (b) if all agents invest the same amount, then the attack vector is uniform.
\begin{corollary} \label{Cor1}
For vertex-transitive networks, there hold true that:
\begin{enumerate}
	\item [(a)] if $q_i=1$ for some $i$, then $a^*_i\leq 1/n$;
\item [(b)] if $q_i=\bar q$ for all $i$, then $a^*_i=1/n$ for all $i$.
\end{enumerate}
\end{corollary}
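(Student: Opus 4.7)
The plan is to derive both statements as direct consequences of Proposition~\ref{prop:attack-vt}, using the normalization $\sum_i a^*_i = 1$. The key preliminary observation is that Proposition~\ref{prop:attack-vt} actually implies the non-strict monotonicity $q_i \ge q_j \Rightarrow a^*_i \le a^*_j$ on vertex-transitive networks. Indeed, if $q_i > q_j$ then $a^*_i < a^*_j$ by the proposition, and if $q_i = q_j$ then neither $a^*_i < a^*_j$ nor $a^*_j < a^*_i$ can hold (applying the contrapositive of the proposition in both directions), so $a^*_i = a^*_j$. Combining the two cases yields the weak version.

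For part (a), I would argue as follows. If $q_i = 1$, then $q_i \ge q_j$ for every $j \in V$, so by the weak monotonicity above $a^*_i \le a^*_j$ for all $j$. Summing over $j$ and using the normalization constraint \eqref{X4formule3-4} gives
\begin{equation*}
n\, a^*_i \;\le\; \sum_{j\in V} a^*_j \;=\; 1,
\end{equation*}
hence $a^*_i \le 1/n$, as claimed.

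For part (b), the same weak monotonicity applied both ways yields that $q_i = q_j$ implies $a^*_i = a^*_j$. Hence, if all investments coincide with $\bar q$, all attack probabilities must share a common value $\bar a$, and the constraint $\sum_i a^*_i = 1$ forces $\bar a = 1/n$.

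Since the corollary is an almost immediate bookkeeping step on top of Proposition~\ref{prop:attack-vt}, I do not expect any real obstacle; the only point worth stating carefully is the extension from the strict biconditional in the proposition to the non-strict version that is used in the argument for (a).
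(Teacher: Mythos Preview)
Your proposal is correct and matches the paper's approach: the paper presents this corollary as an immediate consequence of Proposition~\ref{prop:attack-vt} without giving an explicit argument, and you have simply filled in the routine steps (extending the strict biconditional to the weak monotonicity, then invoking the normalization $\sum_j a^*_j=1$). There is nothing to add.
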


\subsection{Investments under strategic attacks}

In stage 1 of the security game, the security investments are conveniently modeled as the outcome of a game between the agents. In this game, they take the best response $\mathbf{a}^*(\mathbf{q})$ of the adversary into account. The reward of agent $i$ equals (cf.\ \eqref{eq:ut})
$$\Pi_i = 1 -  \sum_j a^*_j (1-q_j) P_{ij}- \frac12 \alpha q_i^2.$$
%
First we analyse the cooperative case, where the social utility 
$$S =\sum_i \Pi_i = n - \sum_j a^*_j  (1-q_j) D_{j}- \frac12\sum_i \alpha q_i^2$$  
is maximized. The proof of the following result is postponed to the Appendix.
%
\begin{theorem}[Social optimum against strategic attacks]\label{X6proposition3}
In a vertex-transitive network facing a strategic attack, the social optimum $\mathbf q^{O,S}$ is unique and equal to
\begin{align}\label{X6formule3}
	q_i^{O,S} = \frac{D}{\alpha n} \qquad \forall i\in V.
\end{align}
\end{theorem}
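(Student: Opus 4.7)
The plan is to reduce the problem to symmetric strategies and then solve a one-dimensional optimization. For any strategy $\mathbf q$, let $\bar q = \frac{1}{n}\sum_i q_i$ and $\bar{\mathbf q} = \bar q\,\mathbf 1$. The key lemma to establish is the symmetrization inequality $S(\mathbf q) \leq S(\bar{\mathbf q})$, with strict inequality unless $\mathbf q = \bar{\mathbf q}$. Once this is in hand, the theorem follows by maximizing $S$ along the one-parameter family of symmetric strategies.

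For the symmetrization, I would bound the cost and the damage terms separately. The cost inequality $\sum_i q_i^2 \geq n \bar q^2$ is Cauchy--Schwarz and is strict unless $\mathbf q$ is symmetric. For the damage, Corollary~\ref{Cor1}(b) ensures that at $\bar{\mathbf q}$ the adversary's best response is uniform, $a_i^* = 1/n$, so the expected damage at $\bar{\mathbf q}$ equals $(1-\bar q) D$. At an arbitrary $\mathbf q$, comparing the attacker's objective at $\mathbf a^*(\mathbf q)$ with its value at the feasible uniform attack $\mathbf a = (1/n) \mathbf 1$ yields
\begin{align*}
\sum_j a_j^*(\mathbf q)(1 - q_j) D - \tfrac{\omega}{2} \|\mathbf a^*(\mathbf q)\|^2 \;\geq\; (1-\bar q) D - \tfrac{\omega}{2n}.
\end{align*}
Rearranging and using $\|\mathbf a^*(\mathbf q)\|^2 \geq 1/n$ (Cauchy--Schwarz on the probability simplex) gives that the expected damage at $\mathbf q$ is at least $(1-\bar q) D$, matching the damage at $\bar{\mathbf q}$.

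Along symmetric strategies $\mathbf q = q\,\mathbf 1$, the welfare reduces to $S(q \mathbf 1) = n - (1-q) D - \frac{\alpha n}{2} q^2$, a strictly concave scalar function with unique critical point $q = D/(\alpha n)$. Since $D \leq n$ and $\alpha \geq 1$, this value lies in $[0,1]$ and is feasible. Combined with the strict symmetrization inequality, this yields the unique social optimum $q_i^{O,S} = D/(\alpha n)$ for all $i$.

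The main obstacle is the damage inequality: a priori one might worry that the adversary could exploit asymmetries in $\mathbf q$ to inflict more damage than at $\bar{\mathbf q}$, which would break the reduction. The convex attack cost $\tfrac{\omega}{2}\|\mathbf a\|^2$ is precisely what makes the inequality go through, because any non-uniform response pays a penalty of $\tfrac{\omega}{2}(\|\mathbf a^*\|^2 - 1/n) \geq 0$ that exactly absorbs the apparent loss incurred when $\mathbf a^*(\mathbf q)$ is replaced by the uniform attack used at $\bar{\mathbf q}$.
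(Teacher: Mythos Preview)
Your proof is correct and takes a genuinely different route from the paper's. The paper argues via first-order conditions: it first shows the optimum lies in the interior of $[0,1]^n$ by checking the sign of $\partial S/\partial q_i$ at the boundary, then writes out the FOC using the explicit formulas~\eqref{eq:derivatives} for $\partial a_j^*/\partial q_i$, and finally rules out asymmetric stationary points by a contradiction argument comparing the FOC at the largest and smallest components of $\mathbf q^{O,S}$.

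Your symmetrization approach is cleaner. By comparing the attacker's optimal value against its value at the uniform attack and using $\|\mathbf a^*\|^2 \ge 1/n$, you obtain the damage bound $\sum_j a_j^*(\mathbf q)(1-q_j)D \ge (1-\bar q)D$ directly, without ever differentiating $\mathbf a^*(\mathbf q)$. Combined with the strict convexity of the investment cost, this reduces the problem to a one-dimensional concave maximization. The payoff is that you avoid all the marginal-change computations of Proposition~\ref{prop:marginal-changes} and the somewhat delicate boundary analysis of the paper; your argument also sidesteps any concern about differentiability of $\mathbf a^*$ at points where some coordinates vanish. The paper's approach, on the other hand, is more mechanical and reuses machinery that is needed anyway for the equilibrium analysis in Theorem~\ref{thm:strategy-equilibrium}, so in context it is less of a detour than it might appear in isolation.
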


\begin{remark}[Uniform investments, uniform attacks and sacrificial lambs]
 
Theorem~\ref{X6proposition3} indicates that it is socially optimal for an agent to invest the same as the others. As a consequence, one may immediately verify that $\mathbf a^*(\mathbf{q}^{O,S})=\frac1n$, that is, the socially optimal uniform investments imply uniform attack probability. In other words, we may say that the socially optimal investments make the strategic advantage of the adversary void.

This recommendation is in contrast with some previous studies, for instance \cite{Bier07} \cite{John12}, suggesting that it might be optimal to leave some agents unprotected and make them {\em sacrificing lambs}. Our setting differs from theirs mainly in the definition of the cost function, which is quadratic in our case (as opposed to linear). We want to stress that the ineffectiveness of sacrificing lambs is not an artefact of our homogeneity assumption. Certainly, the homogeneity of the investments as predicted by Theorem~\ref{X6proposition3} is indeed a consequence of the homogeneity of the networks. However, the indication that {\em optimal investments make the attack probabilities uniform} appear to be valid beyond this scenario. Indeed, detailed  calculations on the star graphs (reported in the Appendix) show that also in that non-homogeneous network an investments strategy that makes the attack probability uniform outperforms a sacrificial lamb strategy.
\end{remark}

Instead, if each agent optimizes his individual reward, the following equilibrium investment levels are attained. 
\begin{theorem}[Equilibrium against strategic attacks]\label{thm:strategy-equilibrium}
In a vertex-transitive network facing a strategic attack, there is a unique { pure strategy} equilibrium vector of investment levels $\mathbf{q}^{N,S}$, which is symmetric and given by
\begin{equation}\label{eq:equilibrium}
q^{N,S}_i = \frac{(n-D)D + \omega}{(n-D)D + \alpha n \omega} \qquad \forall i\in V.
\end{equation} 
\end{theorem}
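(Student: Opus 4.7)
The plan is to characterize each agent's best response via a first-order condition and then to search for a symmetric equilibrium, exploiting the structural results of Proposition~\ref{prop:marginal-changes} and Corollary~\ref{Cor1}(b). Agent $i$'s payoff, which accounts for the attacker's optimal response $\mathbf a^*(\mathbf q)$, reads
\begin{equation*}
\Pi_i(\mathbf q)=1-\sum_{j}a^*_j(\mathbf q)(1-q_j)P_{ij}-\tfrac12\alpha q_i^2.
\end{equation*}
Differentiating with respect to $q_i$ and recalling that $P_{ii}=1$, the FOC becomes
\begin{equation*}
-\sum_{j}\frac{\partial a^*_j}{\partial q_i}(1-q_j)P_{ij}+a^*_i-\alpha q_i=0.
\end{equation*}

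I next impose the symmetric ansatz $q_i=\bar q$ for every $i$. Corollary~\ref{Cor1}(b) gives $a^*_j=1/n$ at that point, so the support of $\mathbf a^*$ is full and $n^*=n$; then Proposition~\ref{prop:marginal-changes}, together with $D_j=D$ from vertex-transitivity, yields $\partial a^*_i/\partial q_i=-(n-1)D/(\omega n)$ and $\partial a^*_j/\partial q_i=D/(\omega n)$ for $j\neq i$. Using $\sum_{j\neq i}P_{ij}=D-1$ (from~\eqref{eq:D} and the symmetry of $P$), the weighted sum collapses to $-D(n-D)/(\omega n)$ and the FOC becomes
\begin{equation*}
(1-\bar q)\,\frac{D(n-D)}{\omega n}+\frac{1}{n}=\alpha\bar q,
\end{equation*}
which rearranges to the claimed formula~\eqref{eq:equilibrium}. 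Since $D\leq n$ and $\alpha\geq 1$, one readily checks $\bar q\in[0,1]$.

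For uniqueness, I rely on strict concavity of each $\Pi_i$ in $q_i$: in any open region where the support of $\mathbf a^*(\mathbf q)$ is locally constant, each $a^*_j$ is affine in $q_i$ by~\eqref{eq:finda}, so $\partial^2 a^*_j/\partial q_i^2=0$ and
\begin{equation*}
\frac{\partial^2\Pi_i}{\partial q_i^2}=\frac{\partial a^*_i}{\partial q_i}-\alpha=-\frac{(n^*-1)D}{\omega n^*}-\alpha<0.
\end{equation*}
Combined with the invariance of the game under the graph automorphisms guaranteed by vertex-transitivity, this yields a unique best-response map whose fixed point must be symmetric. The main obstacle I anticipate is handling the kinks of $\mathbf a^*(\mathbf q)$ at the boundary hyperplanes where some $a^*_j$ hits zero: one must rule out candidate equilibria with reduced attack support, for instance by verifying that any agent whose attack probability has just dropped to zero would strictly prefer to lower her investment, contradicting the corresponding FOC. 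Once this is done, the symmetric interior solution computed above is the unique pure-strategy Nash equilibrium.
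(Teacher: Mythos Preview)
Your derivation of the symmetric candidate~\eqref{eq:equilibrium} is correct and coincides with the paper's computation. The second-derivative formula has a minor slip: both the explicit $a^*_i$ term and the $j=i$ summand in $-\sum_j(\partial a^*_j/\partial q_i)(1-q_j)P_{ij}$ contribute, so the correct expression is $\partial^2\Pi_i/\partial q_i^2 = 2\,\partial a^*_i/\partial q_i - \alpha$; the sign, and hence concavity, is unaffected.

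The genuine gap is your uniqueness argument. Strict concavity of each $\Pi_i$ in $q_i$ makes the best-response correspondence single-valued, and invariance under the automorphism group makes the \emph{set} of equilibria invariant under that group, but neither fact forces the equilibrium to be unique or symmetric. Symmetric games with strictly concave payoffs can have a continuum of asymmetric equilibria: take two players with $\Pi_i=-q_i^2/2+q_i(1-q_{-i})$, where every pair $(q,1-q)$ is an equilibrium. Your sentence ``invariance \ldots\ yields a unique best-response map whose fixed point must be symmetric'' is circular: you would need uniqueness to deduce symmetry from automorphism-invariance, not the other way around.

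The paper fills this gap by a different mechanism. It first shows, without assuming symmetry, that at \emph{any} equilibrium every agent has $q_i>0$ and hence $a^*_i>0$, so $n^*=n$ throughout (this is the rigorous treatment of the reduced-support issue you flag). It then computes the full negated Jacobian $-J$ with $J_{ij}=\partial^2\Pi_i/(\partial q_i\partial q_j)$, checks that it is symmetric and strictly diagonally dominant via $\sum_{j\neq i}P_{ij}=D-1\le n-1$, so that all its principal minors are positive, and invokes a Gale--Nikaido-type result to conclude that the equilibrium is unique. Symmetry then follows a posteriori. This diagonal-dominance step on the cross-partials is the missing ingredient in your proposal.
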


{  Theorem~\ref{thm:strategy-equilibrium}, whose proof is also postponed to the Appendix, shows that the first stage of the security game results in a unique and symmetric vector of investment levels. Combining this equilibrium with the outcome of the second stage, results in the Stackelberg equilibrium of our game.}
\begin{corollary}[Stackelberg equilibrium]
The security game under strategic attack has a unique Stackelberg equilibrium with investment levels $\mathbf{q}^{N,S}$ and attack vector $\mathbf{a^*}(\mathbf{q}^{N,S})$
given by \eqref{eq:findlambda}, \eqref{eq:finda} and \eqref{eq:equilibrium}.
\end{corollary}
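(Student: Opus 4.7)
The plan is to assemble the corollary as a direct consequence of the two ingredients already established: the attacker's best-response characterization in the proposition giving the optimal attack vector, and Theorem~\ref{thm:strategy-equilibrium} on the equilibrium of the first stage. Recall that a Stackelberg equilibrium of the two-stage game is defined as a pair $(\mathbf q^\star, \mathbf a^\star)$ such that $\mathbf a^\star = \mathbf a^*(\mathbf q^\star)$ is a best response of the attacker (i.e.\ solves~\eqref{eq:program-quadratic} with $\mathbf q = \mathbf q^\star$) and $\mathbf q^\star$ is a Nash equilibrium of the induced first-stage game among the agents, in which each agent anticipates the substitution $\mathbf a = \mathbf a^*(\mathbf q)$ when computing her reward.

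First I would observe that, for any investment profile $\mathbf q \in [0,1]^n$, the attacker's optimization problem~\eqref{eq:program-quadratic} admits a unique solution $\mathbf a^*(\mathbf q)$: this was already established in the proposition giving the optimal attack vector, where the uniqueness of $\lambda^*$ (and hence of $\mathbf a^*$) was proved via a direct contradiction argument. Thus the second stage of the game admits a well-defined best-response map $\mathbf q \mapsto \mathbf a^*(\mathbf q)$ characterized by~\eqref{eq:findlambda} and~\eqref{eq:finda}, and any Stackelberg equilibrium must satisfy $\mathbf a^\star = \mathbf a^*(\mathbf q^\star)$.

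Next I would invoke Theorem~\ref{thm:strategy-equilibrium}, which precisely says that on a vertex-transitive network the first-stage game---where each agent $i$ maximizes her reward $\Pi_i = 1 - \sum_j a^*_j(\mathbf q) (1-q_j) P_{ij} - \tfrac12 \alpha q_i^2$ obtained by substituting the attacker's best response---admits a unique pure-strategy Nash equilibrium $\mathbf q^{N,S}$ given by~\eqref{eq:equilibrium}. Combining these two facts, the pair $(\mathbf q^{N,S}, \mathbf a^*(\mathbf q^{N,S}))$ is a Stackelberg equilibrium, and any other Stackelberg equilibrium would necessarily have the same first component (by uniqueness in Theorem~\ref{thm:strategy-equilibrium}) and therefore also the same second component (by uniqueness of $\mathbf a^*(\mathbf q^{N,S})$).

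Since the main work has already been performed in the preceding proposition and theorem, no real obstacle remains: the proof is essentially a bookkeeping argument that stitches the two uniqueness statements together. The only mild point to be careful about is to articulate clearly what a Stackelberg equilibrium means in this setting, so that the reader sees that uniqueness in each stage propagates to uniqueness of the two-stage solution; the corresponding attack vector is then read off from~\eqref{eq:findlambda}--\eqref{eq:finda} evaluated at $\mathbf q = \mathbf q^{N,S}$.
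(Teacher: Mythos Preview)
Your proposal is correct and matches the paper's approach: the paper does not give a separate proof of this corollary, treating it as an immediate consequence of combining the uniqueness of the attacker's best response (the proposition on the optimal attack vector) with the uniqueness of the first-stage Nash equilibrium (Theorem~\ref{thm:strategy-equilibrium}). Your write-up simply makes this stitching explicit, which is exactly what is needed.
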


The equilibrium investments in stage 1 are a function of $D$, the expected number of documents obtained, which in turn depends on the transmission probability $p$.

\begin{remark}[Dependence on $p$]
The equilibrium investments~\eqref{eq:equilibrium} are increasing in $p$ for small $p$, till the point where $D=n/2$, after which they are decreasing in $p$.
Indeed,
\begin{align*}
\frac{d }{dp} \big((n-D)D\big)= (n-2D) \frac{dD}{dp}
\end{align*}
and thus
\begin{align}\label{eq:derivativeqtop}
\frac{d q^{N,S}_i}{dp} &= \frac{(n-2D) \frac{dD}{dp} (\alpha n-1)\omega} {\big( (n-D)D + \alpha n \omega \big)^2}
\end{align} 
In view of Proposition~\ref{prop:Dbounds}, the only root of~\eqref{eq:derivativeqtop} is given by $\hat{p}$ such that $D=n/2$. Further, 
$dq^{N,S}_i/dp > 0$ when $D<n/2$ and $dq^{N,S}_i/dp < 0$ when $D>n/2$. 
\end{remark}

\begin{example}[Ring, cont'd]
For ring networks we derive from Example~\ref{exa:ring} that, after neglecting exponential terms, $\hat{p}\simeq 1-\frac4{n+2}$: hence, 
as $n$ diverges, $\hat{p}$ converges to 1. Moreover, 
$$ \lim_{n\to\infty} q^{N,S}_i=\frac{\frac{1+p}{1-p}}{\frac{1+p}{1-p}+\alpha\omega}.$$
This value is strictly larger than the limit social optimum $\lim_{n\to\infty} q^{O,S}_i=0$ as seen in Example~\ref{exa:ring2}.
We conclude that in large rings (which are sparse networks) strategic attacks lead to {\em over-investments}, $q^{N,S}_i>q^{O,S}_i$.
\end{example}

{ 
\begin{example}[Complete, cont'd]
For complete networks we derive from Example~\ref{exa:complete} that, after neglecting exponential terms, $D\simeq n$
as $n$ diverges. This implies that
$$q^{N,S}_i\simeq\frac1{\alpha n}.$$
Comparing this value with the limit social optimum $\lim_{n\to\infty} q^{O,S}_i=\frac1\alpha$, we conclude that in large complete graphs strategic attacks lead to {\em under-investments}.
\end{example}
}

\section{Discussion}\label{sec:discussion}
The investment levels derived in the previous sections can easily be compared. A summary of the most relevant comparisons is given in the following statement.

\begin{theorem}[Investments in vertex-transitive networks]\label{thm:comparison}
Assume the graph $\G$ to be vertex-transitive.
\begin{enumerate}
\item Socially optimal investments do not depend on the type of attack, that is, $q_i^{O,R}=q_i^{O,S}$.
\item { Equilibrium investments are smaller in case of random attacks than in case of strategic attacks, that is, $q_i^{N,R} \le q_i^{N,S}$ and the inequality is strict unless $p=1$. 
\item Random attacks lead to under-investments at equilibrium, that is, $q_i^{N,R}\le q_i^{O,R}$ and the inequality is strict unless $p=0$.}
\item { Strategic attacks can lead to either under- or over-investments. The level of investment depends on the probability $p$: for smaller $p$, over-investments occur, $q_i^{N,S}>q_i^{O,S}$ and for larger $p$, it leads to under-investments occur, $q_i^{N,S}<q_i^{O,S}$. }
Moreover, the condition 
\begin{align} \label{EqStrengthen}
  2(n-D)D \geq (n-2D)(\alpha n - 1)
\end{align}
is sufficient to guarantee a unique transmission probability $p^{*}$ at which the equilibrium investments are socially optimal, $q_i^{N,S}=q_i^{O,S}$. 
\end{enumerate}
\end{theorem}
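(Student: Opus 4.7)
The plan is to work directly with the explicit formulas listed in the excerpt. On a vertex-transitive network one has $q_i^{O,R} = q_i^{O,S} = D/(\alpha n)$, $q_i^{N,R} = 1/(\alpha n)$, and $q_i^{N,S} = \frac{(n-D)D + \omega}{(n-D)D + \alpha n \omega}$; moreover, Proposition~\ref{prop:Dbounds} gives that $D(p)$ is continuous and strictly increasing on $[0,1]$, with $D(0) = 1$ and $D(1) = n$. Claim~1 is then immediate from the coincidence of the two socially optimal formulas. For Claim~2, clearing denominators reduces $q_i^{N,R} \le q_i^{N,S}$ to $(n-D)D \le \alpha n (n-D)D$, which holds since $\alpha n \ge 1$ and is strict unless $D=n$, i.e.\ unless $p=1$. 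For Claim~3, $1/(\alpha n) \le D/(\alpha n)$ reduces to $1 \le D$, strict unless $D=1$, i.e.\ unless $p=0$.

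For Claim~4 I plan to study the function $h(D) := q_i^{N,S}(D) - D/(\alpha n)$ on $D \in [1,n]$. Direct evaluation at the endpoints yields
\[
h(1) = \frac{(\alpha n - 1)(n - 1)}{\alpha n\,\bigl((n-1) + \alpha n \omega\bigr)} > 0
\quad\text{and}\quad
h(n) = \frac{1-n}{\alpha n} < 0
\]
for $n \ge 2$. Combined with continuity of $h$ and strict monotonicity of $D(p)$, the intermediate value theorem provides at least one $p^* \in (0,1)$ with $h(D(p^*)) = 0$, and the signs at the endpoints translate into over-investment for small $p$ and under-investment for large $p$, as claimed.

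It remains to argue that condition~\eqref{EqStrengthen} forces uniqueness. My plan is to show that the condition implies $h'(D) \le 0$ on $[1,n]$, which gives that $h$ has a unique zero. Writing $g(D) := (n-D)D$, differentiation yields
\[
h'(D) \;=\; \frac{\omega(\alpha n - 1)(n - 2D)}{(g + \alpha n \omega)^{2}} \;-\; \frac{1}{\alpha n},
\]
so after cross-multiplying, $h'(D) \le 0$ is equivalent to $\alpha n \omega (\alpha n - 1)(n - 2D) \le (g + \alpha n \omega)^{2}$. If $n - 2D \le 0$, the bound is trivial. Otherwise, \eqref{EqStrengthen} gives $(\alpha n - 1)(n - 2D) \le 2g$, which multiplied by $\alpha n \omega$ yields $\alpha n \omega (\alpha n - 1)(n - 2D) \le 2 \alpha n \omega g$; this is in turn bounded by $(g + \alpha n \omega)^{2}$ by the elementary inequality $(g + \alpha n \omega)^{2} = g^{2} + 2 \alpha n \omega g + (\alpha n \omega)^{2} \ge 2 \alpha n \omega g$. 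Chaining yields $h'(D) \le 0$, so the zero of $h$ is unique, and uniqueness of $p^{*}$ follows from the strict monotonicity of $D(p)$. The main obstacle is lining up this last chain of algebraic inequalities so that the factor $2$ from the hypothesis exactly matches the cross-term $2\alpha n \omega g$ appearing in the expansion of the square; the rest is a routine comparison of formulas.
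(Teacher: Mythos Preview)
Your proposal is correct and follows essentially the same approach as the paper: items 1--3 are handled by direct comparison of the explicit formulas, and item 4 is handled by checking endpoint values of $q_i^{N,S}-q_i^{O,S}$ for existence of a crossing and then showing, under condition~\eqref{EqStrengthen}, that this difference is monotone (the paper differentiates in $p$ and uses the chain rule with $dD/dp>0$, you differentiate directly in $D$, but the resulting algebraic inequality is identical). One small slip: you state $h'(D)\le 0$, but nonincreasing alone does not preclude a plateau at zero; in fact your own chain gives the strict inequality $2\alpha n\omega g < (g+\alpha n\omega)^2$ because $g^2+(\alpha n\omega)^2>0$, so $h'(D)<0$ and uniqueness follows cleanly.
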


\begin{proof}
The first three items may be verified immediately by inspection. 
For the fourth item, denote the investments by $q_i(p)$ to stress the dependence on $p$. Observe that that $q_i^{N,S}(1)=q_i^{O,S}(0)=\frac1{\alpha n}$, $q_i^{O,S}(1)=\frac1{\alpha}$ and $q_i^{N,S}(0)>\frac{1}{\alpha n}$. This implies that the graphs of $q_i^{N,S}(p)$ and  $q_i^{O,S}(p)$ intersect at least once. 

Applying the chain rule of differentiation and the fact that $D$ increases with $p$ lead to the following inequalities:
\begin{align*}
\frac{\partial}{\partial p}q_i^{O,S}>&\frac{\partial}{\partial p}q_i^{N,S}\\
 \frac{\partial}{\partial D}\frac{D}{\alpha n}>&\frac{\partial}{\partial D}\left(1- \frac{(\alpha n-1)\omega}{(n-D)D +\alpha \omega n}\right)\\
\frac{1}{\alpha n}>&\frac{(\alpha n-1)\omega(n-2 D)}{\big((n-D)D +\alpha \omega n\big)^2} \\
	(n-D)D +(\alpha\omega n)^2 + &\alpha\omega n \left(  2(n-D)D - (n-2D)(\alpha n-1)\right) > 0.
\end{align*}
A sufficient condition for the latter inequality to be true is given by \eqref{EqStrengthen}.
\end{proof}
{ A few comments about this statement are in order.
Regarding social optima, the reader may find surprising that optimal investments against random and strategic attacks coincide. 
Indeed, the optimal investments against strategic attacks trivialize the strategy of the attacker, that is, make the attack probabilities uniform. This observation is confirmed by the star graph example, described in the Appendix, where the investments that make the attack uniform yield higher reward than sacrificing lamb investments. In the special case of vertex-transitive networks, the uniformity in the attack probability is reflected in the uniformity of the investments.

Regarding equilibria, we observe that equilibrium investments against strategic attacks are always higher than against random attacks, which feature under-investment in comparison with the social optimum. In other words, players that are aware of the strategic nature of the attack shall invest more than against a random attack. This difference is consistent with intuition, since players facing a strategic attacker can be expected to invest more to divert attacks away from themselves. However, the theorem also shows that the awareness of strategic attacks is not sufficient to prevent under-investments  (even though strategic investments remain larger than investments in the random setting). Actually, under-investments against strategic attacks appear precisely when the risk is higher, that is, in the presence of a larger transmission probability and a more tightly connected networks. Indeed, the turning point $p^{*}$ from over- to under-investments is lower in denser networks.

These general facts can be numerically verified in our running examples. We report in Figures~\ref{fig:Investments-complete}, \ref{fig:Investments-ring} and \ref{fig:Investments-star} the optimal and equilibrium investments as functions of the diffusion probability $p$, computed for complete, ring, and star topologies with $n=5$ nodes. Several observations can be made about the similarities and differences between these three very different networks (we remind that the star graph is not vertex-transitive, therefore not covered by Theorem~\ref{thm:comparison}). 
\begin{enumerate}
\item
On all these networks, strategic attacks make the agents invest more at equilibrium than random attacks. 
\item On all these networks, the optimal strategic investments are increasing and become equal to $1/\alpha$ when $p=1$ and equal to $\frac1{\alpha n}$ when $p=0$: at that point they coincide with the equilibrium investments against random attack. Instead, equilibrium investments become equal to $\frac1{\alpha n}$ when $p=1$. The extreme values for $p=0,1$ do not depend on the topology. 
\item On all these networks, the equilibrium results in over-investments for small transmission probabilities $p$ and in under-investments for large transmission probabilities. 
\item For each node, there is a unique probability $p^*_i$ where over-investments pass on to under-investments and equilibrium investments are socially optimal.
\item The transition from over-investments to under-investments takes place at smaller transition probabilities where connectivity is stronger. Consistently, the probability with the largest equilibrium investment level is smaller where connectivity is stronger.
\end{enumerate}

%
%

These observations are consistent with our theoretical results, even if some of them were proved for vertex-transitive networks only, thereby showing that their insights are valid on general networks.
}

%
%
%

\begin{figure}[h]
\centering
  \includegraphics[width=0.6\linewidth]{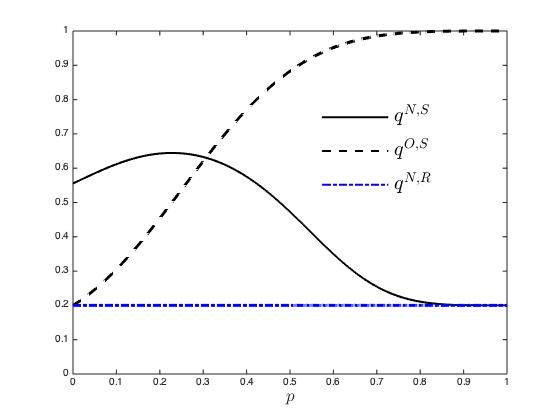}
  \caption{Security investments in a complete graph with $n=5$ nodes where $\alpha=\omega=1$.} 
  \label{fig:Investments-complete}
\end{figure}

\begin{figure}[h]
\centering
  \includegraphics[width=0.6\linewidth]{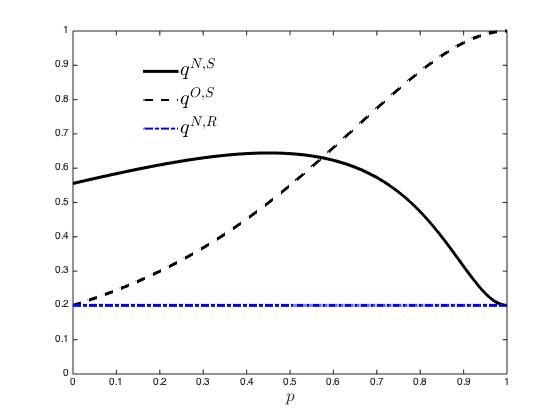}
  \caption{Security investments in a ring graph with $n=5$ nodes where $\alpha=\omega=1$.} 
  \label{fig:Investments-ring}
\end{figure}

\begin{figure}[h]
\centering
  \includegraphics[width=0.6\linewidth]{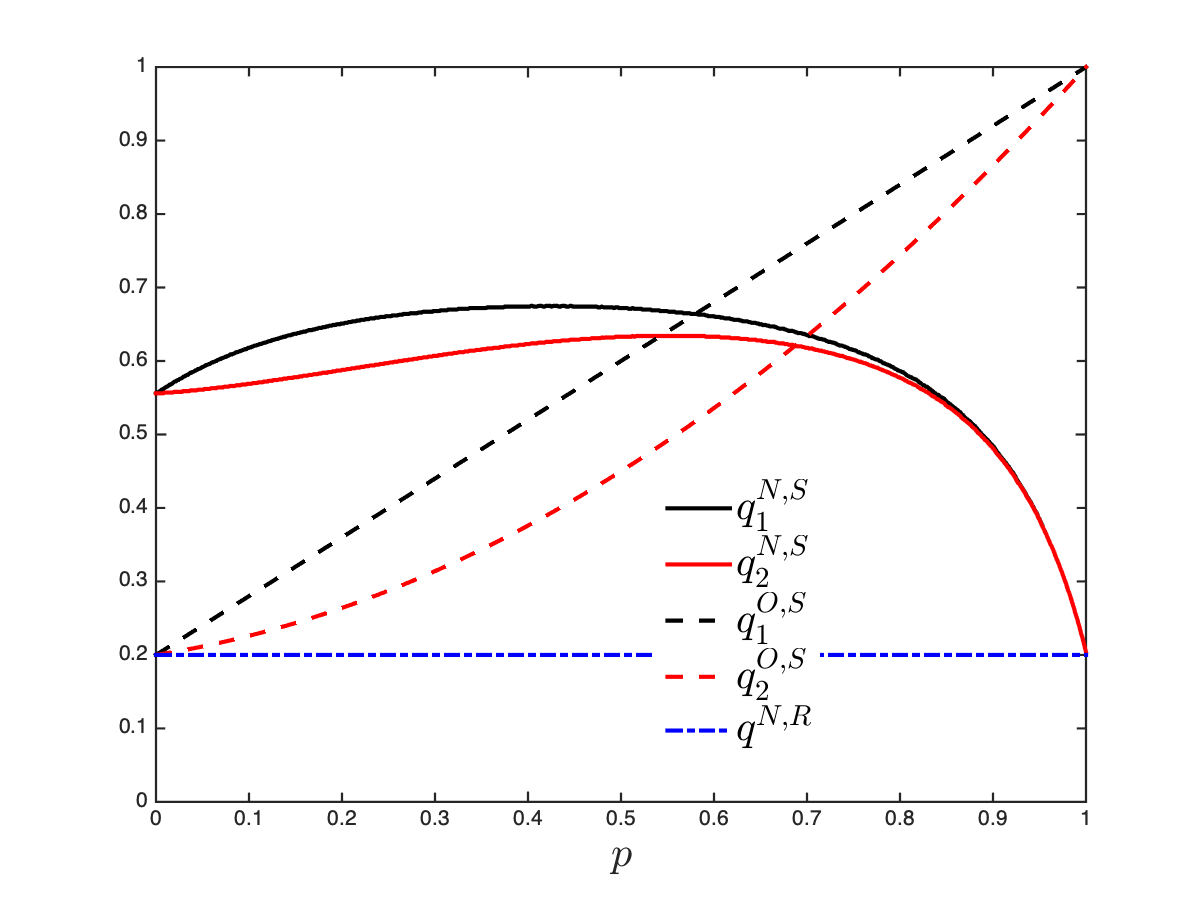}
  \caption{Security investments in a star graph with $n=5$ nodes where $\alpha=\omega=1$.} 
  \label{fig:Investments-star}
\end{figure}

\section{Conclusion}\label{sec:outro}

In this paper, we studied in detail a model of strategic defensive allocation to elucidate the economic forces at play. 
We have shown how the type of attack by the adversary influences the investments by the agents. Equilibrium investments are larger under strategic attacks than under random attacks. Furthermore, in case of random attacks the equilibrium investments are always lower than socially optimal, which represents under-investments in security. Finally, in case of strategic attacks, there are over-investments for small transmission probabilities $p$ and under-investments for large probabilities. This transition takes place at lower probabilities $p$ in more dense networks. { Indeed, those networks where the stakes are higher, because the number of shared documents is larger, are precisely those that are prone to under-investments.}

In a large part of this work, the assumption of vertex-transitivity postulates a homogeneity in the network, which greatly simplifies the analysis. Another  simplifying assumption is the choice of quadratic costs. Even though extending the scope of our analysis would certainly be of interest, we believe that our contribution already exemplifies the fundamental issues of these network privacy games and the key role of the network topology therein.

{ The importance of the network topology is reflected by the fact that optimal investments in random and strategic attacks and equilibrium investments in strategic attacks depend explicitly on the expected number of received documents and, therefore, on the topology. Therefore, the players need to know about the topology to implement their strategies. Since such a knowledge may be hard to obtain in practice, a relevant open question is defining a version of this security game that takes into account suitable limitations of such knowledge.}

\bibliographystyle{plain}
\bibliography{bib4bram}

\appendix

\section{ Information dissemination on the complete graph}

We begin by proving\footnote{The result in Proposition~\ref{Yproposition3} is probably well known. For instance it can be found stated in slide 4 of \url{http://keithbriggs.info/documents/connectivity-Manchester2004Nov19.pdf}. Here we provide a proof for completeness.} formula~\eqref{eq:Qk}.
\begin{proposition}\label{Yproposition3}
Let $Q^n$ be the probability that any document reaches all nodes in $K_n$. Then, for any $p$, it holds that $Q^1 = 1$ and 
\begin{align*}
Q^n=1- \sum_{\ell=1}^{n-1} {{n-1} \choose {\ell-1}} (1-p)^{\ell (n-\ell)} Q^\ell \qquad\forall n>1.
\end{align*}
\end{proposition}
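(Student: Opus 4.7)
The plan is to prove the recursion by a case analysis on the connected component of the source node in the transmission network, exploiting the independence of disjoint edge sets in the Erdős--Rényi random subgraph of $K_n$.

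First I would set up the framework. Without loss of generality, pick a source node, say node $1$, and focus on the transmission network $\mathcal{T}_1$ generated from $K_n$ by including each edge independently with probability $p$. Then $Q^n$ is precisely the probability that all $n$ nodes lie in the same connected component as node $1$ in $\mathcal{T}_1$. The base case $Q^1 = 1$ is immediate since, with only one node, the document has already reached the entire vertex set.

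Next, I would compute the complement probability $1 - Q^n$ by partitioning on the connected component $C$ of node $1$ in $\mathcal{T}_1$. Let $\ell = |C|$; if node $1$ does not reach all others, then $\ell$ ranges over $\{1,\dots,n-1\}$. For each such $\ell$, I would enumerate the subsets $S \subseteq V$ of size $\ell$ containing node $1$: there are exactly $\binom{n-1}{\ell-1}$ of them. The event $\{C = S\}$ decomposes as the conjunction of two subevents: (i) the induced subgraph of $\mathcal{T}_1$ on $S$ is such that node $1$ reaches every other node in $S$; (ii) no edge of $\mathcal{T}_1$ crosses the cut between $S$ and $V \setminus S$.

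The key independence step is then that the edges inside $S$ and the edges in the cut $S \times (V \setminus S)$ form disjoint edge sets in $K_n$, hence their statuses in $\mathcal{T}_1$ are independent. Subevent (i) has probability $Q^\ell$ by definition (applied to the complete graph $K_\ell$ on $S$), and subevent (ii) has probability $(1-p)^{\ell(n-\ell)}$ since there are $\ell(n-\ell)$ cut edges. Summing over all $S$ of size $\ell$ and then over $\ell$ gives
\begin{equation*}
1 - Q^n \;=\; \sum_{\ell=1}^{n-1} \binom{n-1}{\ell-1} (1-p)^{\ell(n-\ell)} Q^\ell,
\end{equation*}
which rearranges to the stated identity.

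I do not expect a genuine obstacle here: the argument is a standard component-decomposition in an Erdős--Rényi-type subgraph of $K_n$. The only point requiring a little care is making explicit that the events on the ``interior'' edges of $S$ and on the ``cut'' edges are independent (so that their probabilities multiply) and that the events $\{C = S\}$ across different $S$ of fixed size $\ell$ are disjoint (so that their probabilities sum). Once these two bookkeeping observations are in place, the recursion follows immediately.
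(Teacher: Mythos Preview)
Your proposal is correct and follows essentially the same approach as the paper: both condition on the connected component of the source node in the random subgraph of $K_n$, factor the event $\{C=S\}$ into the internal-connectivity event (with probability $Q^\ell$) and the empty-cut event (with probability $(1-p)^{\ell(n-\ell)}$) using independence of disjoint edge sets, and then sum over the $\binom{n-1}{\ell-1}$ choices of $S$. The paper's proof is slightly terser about the independence and disjointness points you flagged, but the argument is the same.
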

\begin{proof}
Let $\mathcal{T}_{i}^n$ be a transmission network in $K_n$ and observe that $Q^n$ is equal to the probability that $\mathcal{T}_{i}^n$ is connected. Let $C_n(i )$ be the component in which $i$ lies in the transmission network $\mathcal{T}_{i}^n$ and compute
\begin{align*}
\Pr\{\mathcal{T}_{i}^n \text{ is connected }\} &= \Pr\{ |C_n(i)| = n\} \nonumber \\
&= 1- \sum_{\ell=1}^{n-1}  \Pr\{ |C_n(i)| = \ell\},\nonumber
\end{align*}
where $|C_n(i)|$ is the number of nodes in $C_n(i)$. 
To evaluate $\Pr\{ |C_n(i)| = \ell\}$, let $\mathcal{V}_\ell$ be the set of the subsets of $V$ that include node $i$ and have cardinality $\ell$: recognize that there are $\binom{n-1}{\ell-1}$ such subsets. Next, by conditioning on all $\tilde{V} \in \mathcal{V}_\ell$ and exploiting the assumptions of independence between the edges, we can compute
\begin{align}
\Pr\{ |C_n(i)| = \ell\} &= \sum_{ \tilde{V} \in \mathcal{V}_\ell} \Pr\{ C_n(i) = \tilde{V}\}\nonumber \\
& = \sum_{ \tilde{V} \in \mathcal{V}_\ell} \Pr\{ \text{ $\tilde{V}$ is connected in $\mathcal{T}_{i}^n$ } \} \Pr\{ \text { no edge between $\tilde{V}$  and $V \setminus \tilde{V}$ }\} \nonumber \\
& = \sum_{ \tilde{V} \in \mathcal{V}_\ell} \Pr\{ |C_\ell(i)| = \ell\} (1-p)^{\ell(n-\ell)} \nonumber \\ &= \binom{n-1}{\ell-1} (1-p)^{\ell(n-\ell)} \Pr\{ |C_\ell(i) = \ell| \}\label{Yformule3},
\end{align}
so concluding the proof.
\end{proof}
Next, we prove Equation~\eqref{eq:Pk}.
\begin{proposition}\label{Ycorollary4}
In a complete network on $n$ nodes, for every $p$ and all $i \neq j$
\begin{align*}
P_{ij}^n= \sum_{k=2}^{n} {n-2 \choose k-2} (1-p)^{k (n-k)} Q^k.
\end{align*}
\end{proposition}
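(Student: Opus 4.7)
The plan is to mimic closely the argument used in the proof of Proposition~\ref{Yproposition3}, replacing the event ``$\mathcal{T}_i^n$ is connected'' by the event ``$j$ belongs to the connected component of $i$''. Concretely, let $C_n(i)$ denote, as before, the connected component of $i$ in the transmission network $\mathcal{T}_i^n$. Then by the definition of $P_{ij}^n$ in~\eqref{eq:P}, we have
\begin{equation*}
P_{ij}^n = \Pr\{\, j \in C_n(i) \,\}.
\end{equation*}

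Next, I would partition the event $\{j\in C_n(i)\}$ according to the identity of the component $C_n(i)$. Any realisation of $C_n(i)$ is a subset $\tilde V\subset V$ containing both $i$ and $j$, so its cardinality $k=|\tilde V|$ ranges in $\{2,\dots,n\}$. Letting $\mathcal V_k^{ij}$ denote the family of such subsets, one has $|\mathcal V_k^{ij}|=\binom{n-2}{k-2}$, and
\begin{equation*}
P_{ij}^n = \sum_{k=2}^{n} \sum_{\tilde V\in \mathcal V_k^{ij}} \Pr\{\, C_n(i)=\tilde V \,\}.
\end{equation*}

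The key step, exactly as in the proof of Proposition~\ref{Yproposition3}, is to factor $\Pr\{C_n(i)=\tilde V\}$ as the product of two independent contributions: the event that the induced subgraph of $\mathcal T_i^n$ on $\tilde V$ is connected (which, by the symmetry of $K_n$ and the i.i.d.\ nature of the edge variables, has probability $Q^k$), and the event that none of the $k(n-k)$ edges joining $\tilde V$ to $V\setminus\tilde V$ is present (which has probability $(1-p)^{k(n-k)}$). These two events depend on disjoint sets of Bernoulli variables $X_{\cdot\cdot,i}$ and are therefore independent. Substituting and using that the summand no longer depends on $\tilde V$, one obtains
\begin{equation*}
P_{ij}^n = \sum_{k=2}^{n} \binom{n-2}{k-2} (1-p)^{k(n-k)} Q^k,
\end{equation*}
as required.

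There is no serious obstacle: the only delicate point is to justify cleanly the independence between ``$\tilde V$ is internally connected'' and ``no edge crosses the cut $(\tilde V, V\setminus\tilde V)$'', and to observe that the former probability equals $Q^k$ (not merely depends on $k$), which follows because the restriction of $\mathcal T_i^n$ to $\tilde V$ is distributed exactly as a transmission network on $K_k$. Once this is noted, the computation is a direct parallel of the earlier proof.
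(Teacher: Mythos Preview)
Your argument is correct and very close to the paper's. The only difference is bookkeeping: the paper conditions on the \emph{size} $k$ of the component $C_n(j)$, invokes the symmetry of $K_n$ to argue that $\Pr\{i\in C_n(j)\mid |C_n(j)|=k\}=\tfrac{k-1}{n-1}$, and then plugs in the formula $\Pr\{|C_n(j)|=k\}=\binom{n-1}{k-1}(1-p)^{k(n-k)}Q^k$ already established in~\eqref{Yformule3}; the identity $\tfrac{k-1}{n-1}\binom{n-1}{k-1}=\binom{n-2}{k-2}$ then gives the result. You instead enumerate directly over subsets $\tilde V$ containing both $i$ and $j$, which bypasses the symmetry step but requires re-doing the factorisation of $\Pr\{C_n(i)=\tilde V\}$. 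Both routes rest on the same independence observation, and neither is materially shorter than the other.
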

\begin{proof}
By conditioning on the size of the component in which $j$ lies
\begin{align*}
P^n_{ij} &= \Pr\{ \text{$j$ is connected to $j$ in $\mathcal{T}_i$} \} \nonumber \\
&= \sum_{k=1}^n \Pr\{ \text{$j$ is connected to $j$ in $\mathcal{T}_i$} | ~|C_n(j)| = k \} \Pr\{|C_n(j)| = k\}\\&
= \sum_{k=1}^n \frac{k-1}{n-1}\Pr\{|C_n(j)| = k\} , \nonumber
\end{align*}
where we have used the fact that all nodes are equally likely to be in $C_n(j)$. The result follows by using~\eqref{Yformule3}.
\end{proof}

\section{ Sacrificial vs uniform strategies on the star graph}

Let us first consider the strategy that ensures uniform attack probabilities. From the derivations in Example~\ref{exa:a-star}, we observe that when $\Delta=0$, necessarily $a^*_i=\frac1n$ and $q_1=1-(1-q_2)\frac{D_2}{D_1}$. Therefore,
\begin{align*}
S=&\,n-\sum_j a^*_j (1-q_j) D_j- \frac{\alpha}{2} \sum_j q_j^2\\
=& \,n - (1-q_2) D_2 -\frac{\alpha}2 + \alpha (1-q_2) \frac{D_2}{D_1} - \frac{\alpha}2 (1-q_2)^2 \frac{D_2^2}{D_1^2}-\frac{\alpha}2(n-1)q_2^2
\end{align*}
Its derivative is 
\begin{align*}
\frac{\partial S}{\partial q_2}=& D_2 - \alpha \frac{D_2}{D_1} + \alpha \frac{D_2^2}{D_1^2} - q_2 \big(\alpha \frac{D_2^2}{D_1^2} +\alpha (n-1)\big),
\end{align*}
showing that the reward is optimal for $$ q_2=\displaystyle\frac{\frac{D_2}{\alpha} - \frac{D_2}{D_1} + \frac{D_2^2}{D_1^2}}{ \frac{D_2^2}{D_1^2} + n-1}.$$
Since the expression for the resulting optimal reward is cumbersome, we prefer to present an approximation for large $n$. In that limit, we find 
\begin{align*}
\frac{S^*}{n}=&1-p^2+\frac{p^4}{2\alpha} + O\left(\frac1n\right).
\end{align*}
Moreover, notice that $q^*_2=\frac{p^2}{\alpha}+ O\left(\frac1n\right)$ and $q^*_1=1-p+\frac{p^3}{\alpha}+ O\left(\frac1n\right)$, where the latter quantity is larger than the former: the center has to invest more than the  leaves to ensure uniform attacks.

Let us then compare this reward with that of a sacrificial lamb. In this strategy we assume that one of the leaves is left unprotected, $\subscr{q}{lamb}=0$. In this case, equations~\eqref{eq:find-attack} imply that  $\subscr{a}{lamb}=1$ as long as $q_1$ and $q_2$ are large enough: more precisely, as long as 
\begin{subequations}
\label{eq:so-that-lamb-works}\begin{align}
q_2\ge& \frac{\omega}{D_2}\\
q_1\ge & \frac{\omega + D_1 -D_2}{D_1} 
\end{align}
\end{subequations}
Note that the first quantity goes to zero in the limit of large networks, whereas the second one is approximated by $1-p$: therefore, the lamb strategy is feasible.
Under conditions \eqref{eq:so-that-lamb-works}, we can calculate 
\begin{align*}
\subscr{S}{lamb}=\,&n-\sum_j a^*_j (1-q_j) D_j- \frac{\alpha}{2} \sum_j q_j^2\\
\le\,& n - D_2 -\frac{\alpha}2 \left( \left(1- \frac{D_2}{D_1} +\frac\omega{D_1}\right)^2 + (n-2) \frac{\omega^2}{D_2^2}\right)\\
=\,& n\, (1 - p^2) + o(n).
\end{align*}
This quantity is smaller that $S^*$, thereby showing that the uniform strategy gives higher reward, at least for large enough networks.

\section{Proof of Theorem~\ref{X6proposition3}}
\color{black}
The proof takes four steps. (i) We show that no component of $\mathbf{q}^{O,S}$ is either $0$ or $1$.  
(ii) We deduce the first order conditions (FOC) for optimality of the social optima.
(iii) We show that there is no asymmetric investment level which solves this FOC. 
(iv) We find a symmetric social optimum and prove that this (symmetric) optimum is unique.

(i) Preliminary, we compute the gradient of $S$ as 
$$ \frac{\partial S}{\partial q_i}=-\sum_{j} \frac{\partial a^*_j}{\partial q_i} (1-q_j)D_j   + a^*_iD_i - \alpha q_i.$$
By the assumption of vertex-transitivity this reduces to 
\begin{equation} \label{EqGradS}
	\frac{\partial S}{\partial q_i}=-D \sum_{j}  \frac{\partial a^*_j}{\partial q_i}(1-q_j) + a^*_iD - \alpha q_i.
\end{equation}	

Next, we show that the gradient of $S$, $\nabla(S)$, does not point outward at the boundary of $[0,1]^n$. First,
\begin{align*}
\frac{\partial S}{\partial q_i}(\{ q_i=0,\mathbf{q}_{-i}\}) &=  - D\frac{\partial a^*_i}{\partial  q_i} - D \sum_{j \neq i} \frac{\partial a^*_j}{\partial q_i} (1-q_j) +a^*_iD \nonumber \\
&\geq   - D\frac{\partial  a^*_i}{\partial  q_i} - D \sum_{j \neq i} \frac{\partial a^*_j}{\partial q_i} +a^*_iD \nonumber \\
& = - D \sum_{j} \frac{\partial  a^*_j}{\partial  q_i}+a^*_iD = a^*_i D >0,
\end{align*}
where the final equality follows from \eqref{EqPartA-1}. Second,
\begin{align*}
\frac{\partial S}{\partial q_i}(\{ q_i=1,\mathbf{q}_{-i}\}) &= - \sum_{j \neq i} \frac{\partial a^*_j}{\partial q_i} (1-q_j)D +a^*_i D- \alpha\nonumber \\
&\leq - \sum_{j \neq i} \frac{\partial  a^*_j}{\partial  q_i} (1-q_j) D < 0, \nonumber 
\end{align*}
where the weak inequality follows from $a^*_i D - \alpha \leq 0$ due to $D\leq n$, $a^*_i \leq 1/n$ due to Corollary \ref{Cor1}.(a), and $1\leq \alpha$.

(ii) 
The social optimum $\mathbf{q}^{O,S}$ thus belongs to $(0,1)^n$. From \eqref{EqGradS} and $\partial S/\partial q_i=0$ the social optimum solves for each agent $i$ 
\begin{equation}\label{X6formule2}
\alpha q_i =a^*_i D - D\sum_j \frac{\partial a^*_j}{\partial q_i} (1-q_j).
\end{equation}

(iii) 
In order to prove that all components of $\mathbf{q}^{O,S}$ are equal, without loss of generality let $q_1 = \max{\mathbf{q}^{O,S}}$ and $q_2 = \min{\mathbf{q}^{O,S}}$ and assume that $q_1 > q_2$. We derive a contradiction.
Observe that by~\eqref{X6formule2} 
\begin{align}\label{X6formule2a}
\alpha q_1 &=a^*_1 D - D\frac{\partial a^*_1}{\partial q_1}(1-q_1) - D \sum_{i \neq 1} \frac{\partial a^*_i}{\partial q_1}(1-q_i) \nonumber \\
& =a^*_1 D + D \sum_{i \neq 1} \frac{\partial a^*_i}{\partial q_1}(1-q_1) - D \sum_{i \neq 1} \frac{\partial a^*_i}{\partial q_1}(1-q_i) ,
\end{align}
where the last equality is due to \eqref{EqPartA-1} for $i=1$. Similarly
\begin{align}\label{X6formule2b}
\alpha q_2 &=a^*_2 D - D\frac{\partial a^*_2}{\partial q_2}(1-q_2) - D \sum_{i \neq 2} \frac{\partial a^*_i}{\partial q_2}(1-q_i)\nonumber \\
& =a^*_2 D + D \sum_{i \neq 2} \frac{\partial a^*_i}{\partial q_2}(1-q_2) - D \sum_{i \neq 2} \frac{\partial a^*_i}{\partial q_2}(1-q_i),
\end{align}
with the last equality due to \eqref{EqPartA-1} for $i=2$.
Observe that $a^*_1 < a^*_2$, the definition of $q_1$ implies $0 \leq 1-q_1 \le 1-q^{O,S}_i$ and that $\partial a^*_i/ \partial q_1 \geq 0$ for all $i \neq 1$ by \eqref{eq:derivatives}. Then
\begin{align*}
D \sum_{i \neq 1} \frac{\partial a^*_i}{\partial q_1}(1-q_1) - D \sum_{i \neq 1} \frac{\partial a^*_i}{\partial q_1}(1-q_i) 
= -D \sum_{i \neq 1} \frac{\partial a^*_i}{\partial q_1}(q_1-q_i)
= \alpha q_1 - a^*_1 D< 0,
\end{align*}
with the final equality due to \eqref{X6formule2} and the inequality follows from $q_1-q_\ell>0$ for at least one $\ell$. 
By a similar line of arguments
\begin{align*}
D \sum_{i \neq 2} \frac{\partial a^*_i}{\partial q_2}(1-q_2) - D \sum_{i \neq 2} \frac{\partial a^*_i}{\partial q_2}(1-q_i) 
= \alpha q_2 - a^*_2 D > 0.
\end{align*}
These two inequalities prove that the right-hand side of (\ref{X6formule2b}) is larger than the right-hand side of (\ref{X6formule2a}), which contradicts $q_1>q_2$. Therefore, $q_1=q_2$ and all components of $\mathbf{q}^{O,S}$ are equal.

(iv)
Now we have established that $\mathbf{q}^{O,S}$ is a symmetric social optimal investment level, we  elaborate (\ref{X6formule2}) to derive 
$\alpha q_i^{O,S} =a^*_i D - D (1-q_i^{O,S}) \sum_j\frac{\partial a^*_j}{\partial q_i} = a^*_i D$ 
by \eqref{EqPartA}. By summing $q_i^{O,S} =a^*_i D/\alpha$
over all $i$ and using symmetry we obtain (\ref{X6formule3}).

\section{ Proof of Theorem~\ref{thm:strategy-equilibrium}}
Notice that the agents play a strategic game amongst themselves in stage 1. We refer to the outcome of that stage as an equilibrium.  
The proof is divided into three intermediate steps. \begin{enumerate}
\item We prove that there exists at least one pure strategy equilibrium.
\item 
We prove that the equilibrium is unique and symmetric.
\item We exhibit a symmetric equilibrium.
\end{enumerate}
Let us preliminary recall the reward of agent $i$,
\begin{equation} \label{EqPi}
	\Pi_i=1-\sum_j a^*_j (1-q_j) P_{ij}-\frac12 \alpha q_i^2,
\end{equation}
and that the equilibrium solves $\frac{\partial \Pi_i}{\partial q_i}=0$. The derivative of (\ref{EqPi}) is given by
\begin{align}\label{eq:utdif}
\frac{\partial\Pi_i}{\partial q_i} = a^*_i - \sum_{j \in V} \frac{\partial a^*_j}{\partial q_i}(1-q_j)P_{ij} - \alpha q_i
\end{align}

{\bf Step 1.}
We prove that $\Pi_i$ is quasi-concave in $q_i$. The derivative of \eqref{eq:utdif} is given by
\begin{align}\label{eq:utdiff2}
\frac{\partial^2 \Pi_i}{\partial q_i ^2} &= 2\frac{\partial a^*_i}{\partial q_i}   - \sum_{j \in V} \frac{\partial^2a^*_j}{\partial q_i^2}(1-q_j)P_{ij} - \alpha \nonumber \\
&=-2D \frac{n^*-1}{\omega n^*} - \alpha < 0,
\end{align}
where the second equality follows from~\eqref{eq:derivatives} and $\frac{\partial^2a^*_j}{\partial q_i^2}=0$. As the second derivative of the utility of agent $i$ is negative, we conclude that $\Pi_i$ is actually concave.
We are now in the position to apply the result by Debreu, Fan, Glicksberg \cite{Debr52,Fan52,Glic52}
who showed that a pure strategy Nash equilibrium exists in the strategic form game of stage 1 when the strategy sets are compact and convex, and the utility of each agent is quasi-concave in the agent's own strategy and continuous in the strategy of other agents. 

{\bf Step 2.}
We start by finding the second order derivatives of $\Pi_i$. In (\ref{eq:utdiff2}) we already computed this derivative to $q_i$. Additionally note that the derivative of (\ref{eq:utdif}) to $q_j$ for $j\neq i$ is given by
\begin{align*}
\frac{d^2\Pi_i}{dq_i dq_j} &=\frac{da^*_i}{dq_j} + \frac{da^*_j}{dq_i}P_{ij}  - \sum_{\kappa \in V} \frac{d^2a^*_\kappa}{dq_i dq_j}(1-q_\kappa)P_{i,\kappa} \\
&=\frac{D}{\omega n^*} (1+P_{ij}),
\end{align*}
where $\frac{d^2a^*_\kappa}{dq_i dq_j}=0$ is used in the second equality.

Secondly, we determine the number of agents having a positive probability of being attacked, $n^*$. For any agent $i$
\begin{align}
\frac{\partial \Pi_i}{\partial q_i}(\{0,q_{-i}\}) &= a_i - \sum_{j \neq i} \frac{\partial a_j}{\partial q_i} [1-q_j]P_{i,j} - \frac{\partial a_i}{\partial q_i}  \nonumber \\
&> a_i - \sum_{j \neq i} \frac{\partial a_j}{\partial q_i} - \frac{\partial a_i}{\partial q_i} \nonumber \\
& = a_i - \sum_{j \neq i } \frac{\partial a_j}{\partial q_i} = a_i \geq 0. \label{EqPosQi}
\end{align}
This implies that $q_i>0$: that is, it is not optimal not to investment, since slightly increasing the investment level will result in larger rewards.
Now assume that $a^*_i=0$. By \eqref{EqPi}, the rewards of agent $i$ will be
\[
  \Pi_i=1-\sum_{j\neq i} a^*_j (1-q_j) P_{ij}-\frac12 \alpha q_i^2.
\]
Since the equilibrium investments $q_i$ maximize these rewards, we should have $q_i=0$. But this contradicts our conclusion from \eqref{EqPosQi}. Therefore, our assumption $a^*_i=0$ was false and we must have $a^*_i>0$ for all agents $i$. This implies $n^*=n$, all agents have a positive probability of being attacked.

Combining these results, the negated Jacobian $-J$ with $J_{ij}=\frac{\partial^2\Pi_i}{\partial q_i \partial q_j}$ becomes
\begin{equation}\label{eq:jacobian}
-J=
\begin{bmatrix}
\frac{2n - 2}{\omega n}D + \alpha & -\frac{D}{\omega n}(1+P_{12}) & \cdots & -\frac{D}{\omega n}(1+P_{1n}) \\[2.2ex]
-\frac{D}{\omega n}(1+P_{21}) & \frac{2n - 2}{\omega n}D + \alpha  & \cdots & -\frac{D}{\omega n}(1+P_{2n})  \\[2.2ex]
\vdots & \vdots & \ddots & \vdots \\[2.2ex]
-\frac{D}{\omega n}(1+P_{n1}) & -\frac{D}{\omega n}(1+P_{n2}) & \cdots &\frac{2n - 2}{\omega n}D + \alpha.
\end{bmatrix}
\end{equation}
Next we  show that the  matrix $-J$ is diagonally dominant. 
\begin{align*}
\sum_{j \neq i} |-J_{ij}| &= \sum_{j \neq i}  \frac{D}{\omega n}(1+P_{ij}) \\
&= \frac{D(n-1)}{\omega n} + \frac{D(D-1)}{\omega n}\\
& \leq \frac{D(n-1)}{\omega n} + \frac{D(n-1)}{\omega n}\\
& = D\frac{2n-2}{\omega n} ~\leq~ |-J_{ii}|,\ \mbox{for all\ } i.
\end{align*} 
Because the matrix $-J$ is also symmetric, all principal minors in the negated Jacobian are positive \cite{bapat_raghavan_1997}. Because of this, the Nash equilibrium in a symmetric game is unique \cite{Gale65}. As we already concluded that a pure Nash equilibrium always exists, we are able to conclude that this equilibrium is unique and symmetric. 

{\bf Step 3.} Finally, we exhibit the symmetric equilibrium $\mathbf{q}=q\mathbf{1}$. Because of this symmetry, $a^*_i=1/n$ by Corollary \ref{Cor1}.
Starting from~\eqref{eq:utdif} we obtain \begin{align*}
\frac{\partial\Pi_i}{\partial q_i} 
&= \frac1n- (1-q) \sum_{j \in V} \frac{\partial a^*_j}{\partial q_i}P_{ij} - \alpha q\\
&= \frac1n + (1-q) \frac{n-1}{\omega n} D - (1-q) \frac{D}{\omega n} (D-1) - \alpha q\\
&= \frac1n + (1-q) \frac{D}{\omega n} (n-D) - \alpha q.
\end{align*}
Since the equilibrium solves $\partial\Pi_i/\partial q_i = 0$, 
the expression \eqref{eq:equilibrium} follows immediately.

%
%
%
%
%
%
%
%
%

\end{document}